\newtheorem{theorem}{Theorem}[section]
\newtheorem{lemma}[theorem]{Lemma}
\newtheorem{definition}[theorem]{Definition}
\newtheorem{corollary}[theorem]{Corollary}
\newcommand{\argmin}{\mathrm{arg\, min}}
\def\rob{\mathrm{rob}}
\def\:{\colon}
\newcommand{\R}{\mathbb{R}}
\newcommand{\Q}{\mathbb{Q}}
\newcommand{\N}{\mathbb{N}}
\newcommand{\Z}{\mathbb{Z}}
\newcommand{\id}{\mathrm{id\,}}
\newcommand{\Long}[1]{}
\newcommand{\heading}[1]{\vspace{1ex}\par\noindent{\bf\boldmath #1}}
\begin{document}

\markboth{P. Franek, M. Kr\v c\'al}{Robust Satisfiability of Systems of Equations}

\title{Robust Satisfiability of Systems of Equations
\thanks{This is an extended and revised version of a paper that appeared in the
proceedings of the ACM-SIAM Symposium on Discrete Algorithms 2014.
This research was
supported by the Center of Excellence -- Inst.\ for Theor.\
Comput.\ Sci., Prague (project P202/12/G061 of GA~\v{C}R), by the Project LL1201 ERCCZ CORES
and by institutional support RVO:67985807.}}

\author{PETER FRANEK \\
{Institute of Computer Science, ASCR Prague} \\
MAREK KR\v C\'AL \\
{IST Austria, Klosterneuburg}}

\maketitle

\begin{abstract}
We study the problem of \emph{robust satisfiability} of systems of nonlinear equations, namely, whether for a given continuous function $f\:K\to\R^n$
on a~finite simplicial complex $K$ and $\alpha>0$, it holds that each function $g\:K\to\R^n$ such that $\|g-f\|_\infty \leq \alpha$,  has a root in $K$.
Via a reduction to the extension problem of maps into a sphere, we particularly show that this problem is 
decidable in polynomial time for every fixed $n$, assuming $\dim K \le 2n-3$. This is a substantial extension of previous computational applications of \emph{topological degree} and related concepts in numerical and interval analysis.

Via a reverse reduction we prove that the problem is undecidable when $\dim K\ge 2n-2$, where the threshold comes from the \emph{stable range} in homotopy theory.

For the lucidity of our exposition, we focus on the setting when $f$ is piecewise linear. Such functions can  approximate general continuous functions, and thus we get approximation schemes and undecidability of the robust satisfiability in other possible settings.
\end{abstract}

\section{Introduction}
\label{sec:intro}
In many engineering and scientific solutions, a highly desired
property is the resistance against noise or perturbations.
We can only name a fraction of the instances: stability in data
analysis~\cite{Carlsson:2009}, robust optimization~\cite{Ben:2009}, image processing~\cite{Goudail:2004},
or stability of numerical methods~\cite{Higham:2002}.
Some of the most crucial tools for robust design come from topology,
which can capture the stable properties of spaces and maps.
Famous concepts using topological methods in computer science
are fixed point theory~\cite{Cronin:1964}, fair division theory~\cite{Longueville:2012},
persistent homology~\cite{EdelsbrunnerHarer:PersistentTopologySurvey-2008}, or discrete Morse theory~\cite{Forman:2002}.

In this paper, we take the robustness perspective on solving systems of nonlinear equations, a fundamental problem 
in mathematics and computer science. The tools of algebraic topology  will enable us to quantify and compute the robustness of a solution---its resistance against perturbations of a given system of equations. The robustness of a root is a favourable property as the system  may come from imprecise measurements or from a model  with inherent uncertainty.\footnote{For example, settings of a problem may evolve over time and we would like to know if the problem has a solution also tomorrow.}
In case that the system is given by arithmetic expressions,
we can use less precise but fast floating point operations and
still identify the root, if it is a robust one.

Much research has been done in this direction, and as far as we know, the  \emph{topological degree} has always been the essential core of the approaches. The topological degree is indeed a powerful and practically usable tool for proving the existence of robust solutions of systems of the form $f=0$ for (nonlinear) continuous functions $f\:\R^n\to\R^n$. Developing techniques for such proofs is a major theme in the interval computation community
\cite{Neumaier:90,Kearfott:04,Frommer:05,Frommer:2007,Collins:08b}, although the emphasis is usually not put on the completeness of the
tests,\footnote{See \cite{Frommer:2007,Franek_Ratschan:2012} for some incompleteness results.} but on their usability within numerical solvers. Particularly, efficient formulas for topological degree has been devised in the case where the map $f$ is polynomial \cite{Eisenbud-degree,Szafraniec-degree}.  

\heading{Our contribution.} The main ingredient of our results is the replacement of the topological degree by the \emph{extendability} of maps into spheres. The extension problem is given by a continuous map $g$ defined on a subspace $A$ of a topological space  $X$ (in our case the map will always take values in a sphere $S^d$) and the question is whether $g$ can be continuously extended to whole of $X$.\footnote{The topological degree can be understood as a solution for a special case of the extension problem: namely, the degree of a map $\partial B^{d+1}\to S^d$ from the boundary of  a $(d+1)$-disk to a $d$-sphere is zero if and only if the map can be extended to whole of the disk.} Extendability provides a  solution that combines the following three properties:
\begin{figure}
\begin{center}
  \includegraphics[scale=.41,page=1]{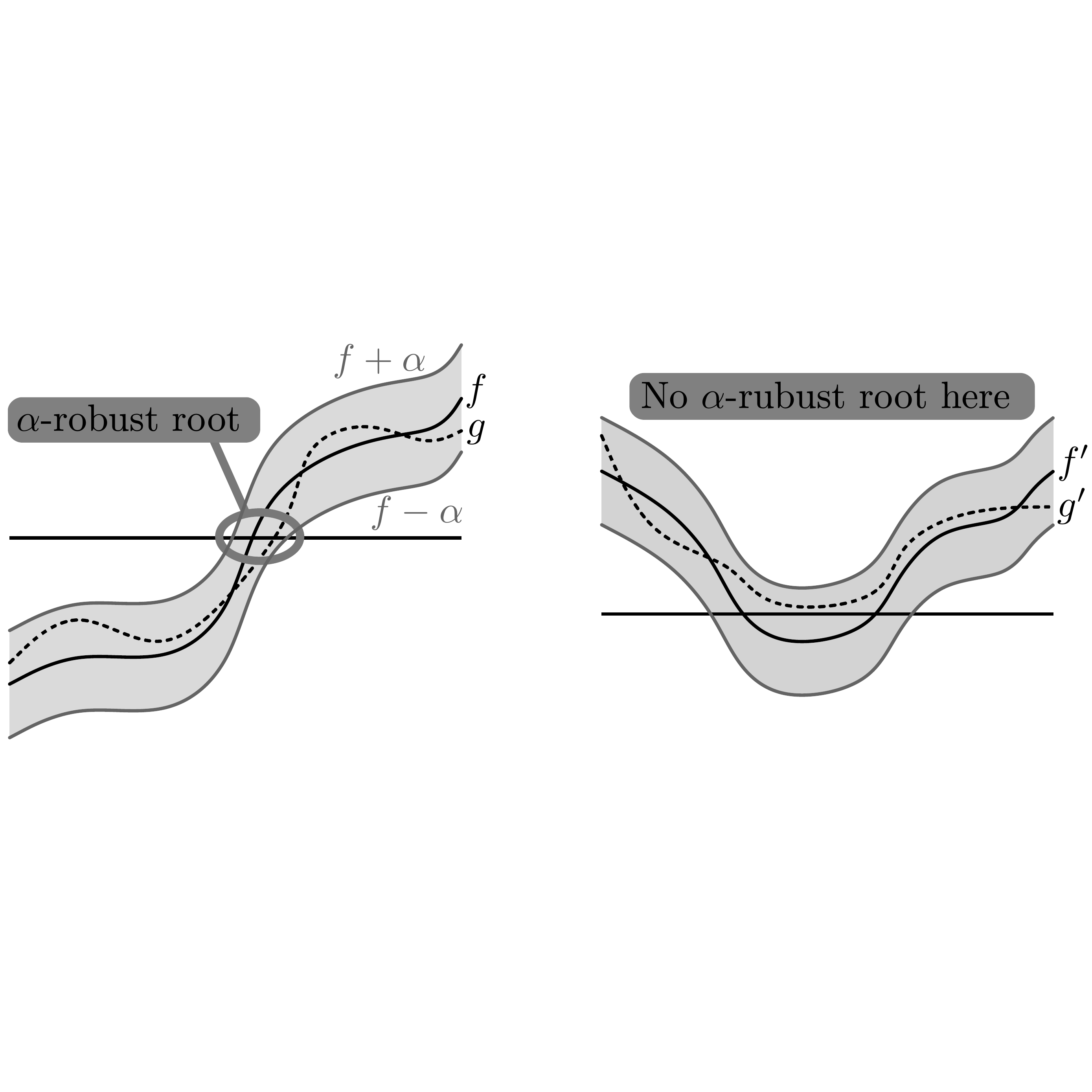}
  \includegraphics[scale=.41,page=1]{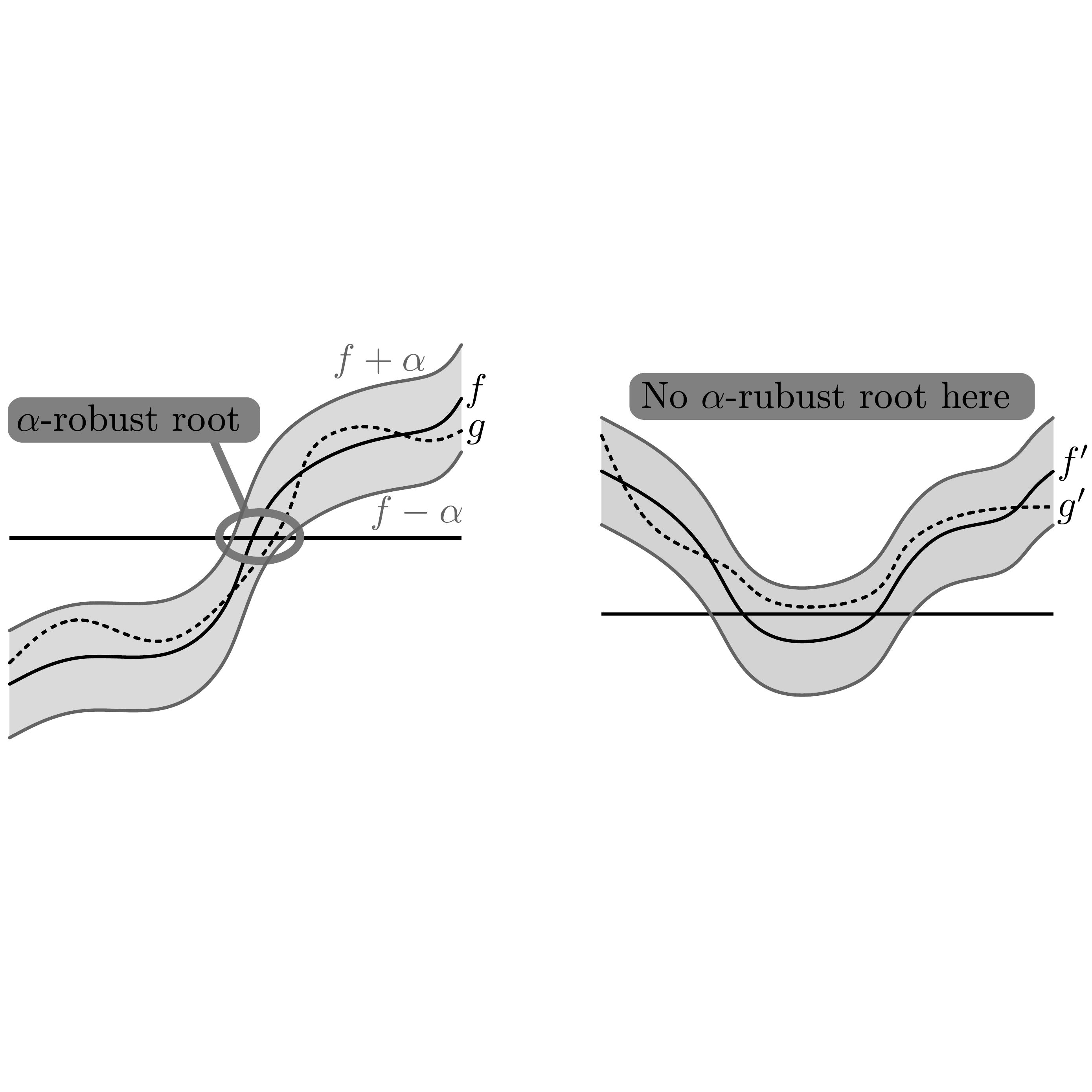}
  \caption{Illustration of one-dimensional functions with and without an $\alpha$-robust root.}
  \label{fig:1}
\end{center}
\end{figure}

\begin{enumerate}
\item
We \textbf{quantify the robustness} of the roots. Formally, for a given $\alpha>0$ we say that $f\:K\to\R^n$ on a compact domain $K\subseteq\R^m$ has an \emph{$\alpha$-robust root} whenever every \emph{$\alpha$-perturbation} $g$ of $f$---that is, a function $g\:K\to \R^n$ with $\|f-g\|:=\max_{x\in K} |f(x)-g(x)|\le \alpha$---has a root. 
\item
We show that the extendability is a \textbf{complete criterion} that verifies or disproves that a given function $f\:K\to \R^n$ has an $\alpha$-robust root for some given $\alpha>0$ (see Lemma~\ref{lemma:robust2ext}).
\item
Our solution includes \textbf{underdetermined equations},\footnote{Formally, our solution also includes overdetermined systems, but such systems never have a~robust solution.}
that is, the case with $\dim K>n$. On the positive side, we show an algorithm that works for  $\dim K\le 2n-3$ (here the running time is polynomial when $\dim K$ is fixed) or $n$ even (see Theorem~\ref{t:decidability}). On the negative side, we prove undecidability when $\dim K\ge2n -2$ and $n$ is odd (see Theorem~\ref{t:undecidability}).
\end{enumerate}
The core of the algorithm and the undecidability depends on recent  computational complexity studies of the extension problem~\cite{polypost,ext-hard,vokrinek:oddspheres}. We will state the relevant results in Section~\ref{top:prel}.

\heading{The computational setting.} 
In this paper, we assume the simple yet useful setting where the function $f\:K\to\R^n$ is piecewise linear (PL) on a~finite simplicial complex $K$.\footnote{A (geometric) simplicial complex $K$ is a collection of simplices in some $\R^d$ such that for every two simplices $\Delta,\Delta'\in K$ their intersection $\Delta\cap\Delta'$ is also a simplex of $K$ or an empty set. The underlying topological space $|K|:=\bigcup K$ will be also denoted by $K$ when no confusion can arise. } That means, $f$ is  linear on every simplex of $K$. Every such function $f$ is uniquely determined by its values on the vertices of $K$---as the linear interpolation on every simplex of $K$.
The usefulness of the PL setting can be seen in the facts that
\begin{itemize}
\item such~functions $f$ can be naturally obtained for instance  from measurements in every vertex of a grid or every vertex of a triangulation of a domain of interest, and,
\item arbitrary continuous function can be approximated arbitrarily precisely by a PL-function and thus our results in the PL setting yield analogous results in other possible settings, see Section~\ref{s:nonlinear}.
\end{itemize}
For easy computational representation of the input, we assume that our algorithms are given only PL functions with rational  values on the vertices.\footnote{We could use arbitrary subset of real numbers where the computations of absolute values and comparisons are possible in polynomial time.} 
We emphasize that the above mentioned set of $\alpha$-perturbations of a given function $f\:K\to \R^n$  contains continuous functions that are neither PL nor have rational values on the vertices. 

Every simplicial complex can be encoded as a hereditary set system $K$
on a set of vertices $V(K)$---such $K$ is called \emph{abstract simplicial complex}~\cite[p. 359]{Fulton:1995}.
Every abstract simplicial complex can be realized as a geometric simplicial complex uniquely 
up to a homoemorphism.\footnote{Every such a realization is given by a convenient embedding of the set of vertices in some 
$\R^d$. Then the collection of abstract simplices of $K$ determines the collection of the geometric simplices of the realization. 
The required condition is that  the intersection of arbitrary two geometric simplices of the realization also belongs to the realization or is empty.} 
We will use the notation $K$ for both the abstract simplicial complex and its geometric realization where no confusion is possible. 

Throughout this paper, $|\cdot |$ will denote a fixed norm on
$\R^n$. Its choice can be arbitrary for the undecidability results. For the algorithmic results we will need that 
\begin{eqnarray}\label{e:compare}
&&\begin{array}{l}
\text{for every rational point $y\in\Q^n$, the value $|y|$ can
be compared}\\
\text{to any given rational number $\alpha$, and,}\\
\end{array}\\
\label{e:min}&&\begin{array}{c}
\text{for every  PL map } f\:\Delta\to \R^n\text{ with rational values on the vertices} \\ 
\text{of a simplex }\Delta,  \text{ the point } \,\argmin_{x\in\Delta}|f(x)|\text{ can be computed.} \hfill
\end{array}
\end{eqnarray}
When the dimension of $\Delta$ is fixed, the computation can be done in polynomial time for $\ell_1$, $\ell_\infty$ (the simplex method for linear programming) and $\ell_2$ norms (Lagrange multipliers).
As already introduced above, for a function $f\:K\to\R^n$, the notation $\|f\|$ stands for the $L_\infty$ norm of $f$, that is, $\|f\|:=\max_{x\in K} |f(x)|$.

\heading{The algorithmic results.} The next definition introduces the main problem we study.
\begin{definition}
\label{def:rob-sat}
Assume that a norm on $\R^n$ is fixed. The Robust Satisfiability Problem (ROB-SAT) is the problem of deciding, for a given continuous function $f: |K|\to\R^n$
on a simplicial complex  $K$ and a number $\alpha > 0$, whether each $\alpha$-perturbation of $f$ has a root.

Let $\mathcal{E}$ be a set of functions with some agreed upon encoding.
We say that the ROB-SAT problem for $\mathcal{E}$ is \emph{decidable}, if there exists
an algorithm that correctly decides the above problem for each $f\in\mathcal{E}$ and $\alpha> 0$.
\end{definition}

\begin{theorem}\label{t:decidability}
 Let  $\mathcal{E}$ be the set of all PL functions $f\:K\to\R^n$ where $K$ is
a~finite simplicial complex, $n\in\N$ and either $\dim K\leq 2n-3$ or $n\leq 2$. Assume that a norm $|\cdot |$ on $\R^n$ is fixed and that we have an oracle for \eqref{e:compare} and \eqref{e:min}.\footnote{As we mentioned above, such an oracle can be implemented in polynomial time in the case of $\ell_1$, $\ell_2$ and $\ell_\infty$ norms.}
Then the ROB-SAT problem for $\mathcal{E}$ is decidable.
Moreover, for each fixed $n>0$, the running time is polynomial.

Additionally, ROB-SAT is decidable for all PL-functions $f\:K\to \R^n$ for $n$ even without any  restriction on the dimension of the simplicial complex $K$.
\end{theorem}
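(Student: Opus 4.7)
The overall strategy is to reduce ROB-SAT to the extension problem for maps into the sphere $S^{n-1}$ via Lemma~\ref{lemma:robust2ext}, and then to apply the known decidability results for the extension problem from \cite{polypost,ext-hard,vokrinek:oddspheres}. Concretely, for the input $(K,f,\alpha)$ we want to build a simplicial pair $(L,A)$ (where $L$ subdivides $K$ and $A$ is a subcomplex representing $\{x\in K:|f(x)|\geq\alpha\}$) together with a PL map $\varphi\colon A\to S^{n-1}$ homotopic to $x\mapsto f(x)/|f(x)|$, in such a way that $f$ has an $\alpha$-robust root iff $\varphi$ does \emph{not} extend to a continuous map $L\to S^{n-1}$.

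The preprocessing breaks into two steps. First, for each simplex $\Delta$ of $K$ the map $f|_\Delta$ is linear and hence $|f|$ is convex, so we use the oracle~\eqref{e:min} together with comparisons~\eqref{e:compare} to decide whether $\min_\Delta |f|\geq\alpha$, $\max_\Delta|f|\leq\alpha$, or $\Delta$ straddles the level set $\{|f|=\alpha\}$. In the straddling case we refine $\Delta$ (e.g.\ by a radial subdivision through the $\argmin$ returned by the oracle) to obtain simplices whose vertices lie in one of the two regions; for norms such as $\ell_1$ and $\ell_\infty$ the true level set is PL and this yields $A$ exactly, while for $\ell_2$ one works with a rational PL inner approximation of $A$ and exploits monotonicity of the robustness predicate in $\alpha$ to recover the correct answer with polynomially many such approximations. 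Second, on $A$ the function $x\mapsto f(x)/|f(x)|$ is well-defined and continuous with values in $S^{n-1}$; fixing a triangulation of $S^{n-1}$ and further (finitely) subdividing $A$, we replace it by a simplicial approximation $\varphi$ that is homotopic to it and hence equivalent for extendability.

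With $(L,A,\varphi)$ in hand we invoke the extension algorithms case by case. When $\dim K\leq 2n-3$ the pair satisfies $\dim L\leq 2(n-1)-1$, the stable range for $S^{n-1}$, and extendability of $\varphi$ is decidable in polynomial time for fixed $n$ by~\cite{polypost,ext-hard}. When $n$ is even, $S^{n-1}$ is odd-dimensional and \cite{vokrinek:oddspheres} decides extendability in polynomial time without any dimension restriction on $L$. For $n=1$ the target $S^0$ is discrete, and extendability reduces to checking that $\varphi$ is constant on each connected component of $L$ that meets $A$, which is linear-time. For $n=2$ the target $S^1$ is a $K(\Z,1)$, so the obstruction to extension lives in $H^2(L,A;\Z)$ and can be computed in polynomial time by elementary cohomological means. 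Combining these with Lemma~\ref{lemma:robust2ext} yields the stated decidability and complexity.

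The delicate point, and the main obstacle, is the bookkeeping required to keep the preprocessing polynomial and the reduction faithful. For fixed $\dim K$ each straddling simplex is split into only $O(1)$ sub-simplices, and for fixed $n$ the simplicial approximation of $f/|f|$ into a fixed triangulation of $S^{n-1}$ likewise adds a bounded blow-up per simplex, so the complex fed into the extension algorithm has size polynomial in the input. The subtler issue is the $\ell_2$ case: one must show that a sufficiently fine rational PL approximation of $A$ together with a PL approximation of $f/|f|$ produces the same extendability answer as the exact problem. This is handled by showing that for any $\alpha$ the set of ``bad'' thresholds (those for which PL and exact answers could disagree) is finite and avoidable by a polynomially small perturbation of $\alpha$, using monotonicity of the ROB-SAT predicate in $\alpha$.
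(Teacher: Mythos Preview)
Your overall strategy---reduce via Lemma~\ref{lemma:robust2ext} to an extension problem into $S^{n-1}$, then invoke the algorithms of \cite{polypost}, \cite{vokrinek:oddspheres}, and the elementary treatment for $n\le 2$---is exactly the paper's approach. Your choice of the pair $(K,\{|f|\ge\alpha\})$ instead of the paper's $(\{|f|\le\alpha\},\{|f|=\alpha\})$ is an equivalent reformulation (gluing an extension on the sublevel set with $f$ on the superlevel set gives an extension on all of $K$, and conversely one restricts).

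There are, however, two concrete problems. First, your claim that \cite{vokrinek:oddspheres} ``decides extendability in polynomial time'' is false: Vok\v{r}\'{i}nek's algorithm for odd-dimensional spheres comes with no running-time bound, even for fixed $\dim K$. Accordingly, the theorem only asserts decidability (not polynomial time) in the ``$n$ even'' clause, and your proposal overstates what is proved.

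Second, your preprocessing for general norms is both vaguer and more complicated than necessary. Starring a straddling simplex at its $\argmin$ does \emph{not} produce subsimplices whose vertices lie in a single region, so the description is not correct as written; and your $\ell_2$ workaround (rational PL inner approximation of $A$, then ``avoid bad thresholds by perturbing $\alpha$'') is hand-wavy---the input is a specific $\alpha$, possibly exactly a threshold, and you have not shown how to decide that case. The paper avoids all of this with a single device (Lemma~\ref{homeomorphism}): after starring each simplex at $\argmin_\Delta|f|$ so that $|f|$ attains its extrema at vertices, one introduces the PL indicator $\chi\colon K\to\{0,\tfrac12,1\}$ on vertices according to whether $|f(v)|$ is ${<}\alpha$, ${=}\alpha$, or ${>}\alpha$, and takes $X=\chi^{-1}[0,\tfrac12]$, $A=\chi^{-1}(\tfrac12)$. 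This pair is simplicial after one further starring, is homeomorphic to $(|f|^{-1}[0,\alpha],|f|^{-1}\{\alpha\})$ for any norm satisfying \eqref{e:compare}--\eqref{e:min}, and the homeomorphism is homotopic to the identity through maps keeping $f$ nonzero on $A$. This is the missing key lemma in your outline; with it, no case split on the norm and no approximation argument is needed.
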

The decision procedure involves recent algorithmic results from computational homotopy theory, namely the algorithms for the \emph{extension problem}~\cite{ext-hard,vokrinek:oddspheres}.
The limit for the dimension of $K$ comes from so-called  \emph{stable range} in homotopy theory.

\heading{Locating a robust root.} In fact, our algorithm yields approximations of one or more  connected components within $K$ such that every $\alpha$-perturbation of $f$ has a root in each of the components.
Conversely, for every point $x$ of each of the components, there is an $\alpha$-perturbation with a root in $x$.
Thus the algorithm approximately localizes the roots within the precision that the  parameter $\alpha$ allows.

\heading{Computing the \emph{robustness of a root}.} Let  $f: K\to\R^n$ be a PL function.
It is easy to show that the set of all $\alpha\geq 0$, such that every $\alpha$-perturbation of $f$ has a root,
is either empty, or a closed and bounded set. Let us denote, if it is nonempty, 
the maximum of this set by $\rob(f)$---the \emph{robustness of the root of $f$}. We will prove that
for any PL function $f:K\to \R^n$ that has a root,  its robustness $\rob(f)$ is equal to $\min_{x\in\Delta} |f(x)|$ for some simplex $\Delta\in K$.
Consequently, by a simple binary search, we can compute the exact value $\rob(f)$.\footnote{Here we use the fact that if $|\cdot |$ is
the max-norm, then $|f|$ takes the minimum on each simplex in a computable rational point $x^*\in\Q^d$.
The same is true for the Euclidean norm $|\cdot |_2$,
but the minimum $\min_\Delta |f|_2$ is a square root of a rational number.}

It is worthwhile to identify other natural classes $\mathcal E$ of functions  on a compact domain for which the robustness of the roots can be computed exactly.
However, we omit this direction in this paper.
Arbitrary continuous functions on a compact domain can be approximated by piecewise linear ones up to arbitrary level of precision,
and thus the robustness of their roots can be approximated arbitrarily  precisely as well, see Section~\ref{s:nonlinear}.


\heading{Inequalities.}
We can generalize our results to systems of equations and \emph{inequalities}. Such system can be formally described
by $f=0\,\wedge\,g\leq 0$ for some $(f,g): K\to \R^n\times \R^k$. An \emph{$\alpha$-perturbation of this system} is
a system $\tilde{f}=0\,\wedge\,\tilde{g}\leq 0$ such that $\|f-\tilde{f}\|\leq\alpha$ and $\|g-\tilde{g}\|\leq\alpha$. Expectably, the \emph{robustness of a satisfiable system $f=0\,\wedge\,g\leq 0$} is defined to be the maximal $\alpha>0$ such that every $\alpha$-perturbation of the system has a solution.

If we assume that we use the max-norm in $\R^n$, then we will derive as a~corollary of Theorem~\ref{t:decidability}, that
the following problem is decidable: given PL functions $(f,g): K\to \R^n\times \R^k$ such that $\dim X\leq 2n-3$ or $n\leq 2$,
decide whether each $\alpha$-perturbation of $f=0\,\wedge\,g\leq 0$ is satisfiable or not. 
This will be discussed at the end of~Section~\ref{sec:decidability}.

\heading{Undecidability.}
The main question now is what happens in the remaining case when  $\dim K\ge 2n-2$ and $n$ is odd.
We claim that no  approximation of the robustness of roots is algorithmically possible here. The formal statement follows:
\begin{theorem}\label{t:undecidability}
Let $n>2$ be odd and $|\cdot|$ be an arbitrary norm in $\R^n$. Then there is no algorithm that, given a~finite
simplicial complex $K$ of dimension $2n-2$ and a PL function $f: K\to \R^n$, correctly chooses at least one of the following answers:
\begin{itemize}
\item $\rob(f)>0$ 
\item $\rob(f)< 1$
\end{itemize}
\end{theorem}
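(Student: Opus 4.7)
The plan is to prove undecidability via reduction from the extension problem for maps into odd-dimensional spheres. By results stated in Section~\ref{top:prel} (citing \cite{vokrinek:oddspheres}), the following problem is undecidable for $n > 2$ odd: given a finite simplicial pair $(L, A)$ with $\dim L \leq 2n-2$ and a simplicial map $h\: A \to S^{n-1}$, decide whether $h$ extends continuously to $L$. The goal is to encode an arbitrary such instance as a PL function $f\: K \to \R^n$ in such a way that the two alternatives $\rob(f) > 0$ and $\rob(f) < 1$ separate extendable from non-extendable instances.

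The construction proceeds as follows. Given $(L, A, h)$, I set $K := L$ and define $f$ on vertices of $K$ by $f(v) := h(v)$ (viewing $S^{n-1} \subset \R^n$, and replacing $h(v)$ by a sufficiently close rational unit vector if necessary) for $v \in V(A)$, and $f(v) := 0$ for $v \in V(L) \setminus V(A)$; then extend PL. After possibly passing to a single barycentric subdivision and slightly shrinking the values on $A$ if needed, I can arrange that $|f(x)| = 1$ exactly on the subcomplex $A$ up to a small slack, and that $f$ is a valid PL function with rational values on the vertices. The simplicial complex $K$ has the same dimension as $L$, so $\dim K \leq 2n-2$ as required.

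The reduction works via the following two claims. If $h$ extends to a continuous map $\tilde h\: L \to S^{n-1}$, then by simplicial approximation (after further subdivision) there is a PL function $g\: K \to \R^n$ with $g(K) \subseteq S^{n-1}$, agreeing with $f$ on $A$ up to arbitrarily small error. Since $\|g - f\|$ is bounded strictly below $1$ on simplices entirely outside $A$ (both $f$ and $g$ have norm at most $1$ there, but we can arrange the estimate by rescaling $f$), $g$ is a perturbation of $f$ with $\|g - f\| < 1$ and no root, hence $\rob(f) < 1$. Conversely, if $\rob(f) > 0$, pick an $\alpha$-perturbation $g$ of $f$ with no root and with $\alpha$ smaller than the distance between $S^{n-1}$ values of $f$ on $A$ and zero: then $g/|g|\: K \to S^{n-1}$ is a well-defined continuous extension of a map homotopic to $h$ on $A$, contradicting non-extendability of $h$. (Here I appeal to Lemma~\ref{lemma:robust2ext} to make the correspondence between robust satisfiability and extendability rigorous.) Thus $h$ extends if and only if $\rob(f) < 1$, and does not extend if and only if $\rob(f) > 0$, and no algorithm can distinguish these cases on all inputs.

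The main obstacle is calibrating constants and norms so that the threshold $1$ appears cleanly in both directions; in particular, one must rescale $f$ so that the natural distance between the zero map on $L \setminus A$ and any nowhere-zero extension is at least $1$, while still ensuring that the image of $A$ stays on (or very near) the unit sphere to justify the homotopy equivalence between $g/|g||_A$ and $h$. A secondary technicality is the passage between simplicial maps $A \to S^{n-1}$ (for which the cited undecidability applies) and continuous maps into $\R^n$ (the ROB-SAT setting); this is handled by standard PL approximation and the fact that homotopy classes of maps to $S^{n-1}$ depend only on the simplicial approximation up to subdivision. Both issues are routine once the reduction is set up, but they govern the precise statement of the theorem.
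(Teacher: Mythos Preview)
Your overall construction---defining the PL function to equal the simplicial map on $A$ and zero on the remaining vertices---is exactly the reduction the paper uses. However, there is a genuine gap: you invoke the wrong undecidability result. You write that the extension problem for maps into \emph{odd}-dimensional spheres is undecidable, citing \cite{vokrinek:oddspheres}. But Theorem~\ref{t:oddspheres} (Vok\v{r}\'{i}nek) is a \emph{decidability} result: when the target has finite relevant homotopy groups (as odd-dimensional spheres do), extendability is algorithmically decidable. The undecidability input is Theorem~\ref{top-nonexistence}, which applies to \emph{even}-dimensional spheres $S^d$, $d\ge 2$ even. Since $n>2$ is odd, $S^{n-1}$ is even-dimensional, and that is precisely why the reduction works; your statement has the parity reversed and the citation backwards.

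There is also a logical slip in your second implication. You write ``if $\rob(f)>0$, pick an $\alpha$-perturbation $g$ of $f$ with no root\ldots''---but $\rob(f)>0$ means exactly that every sufficiently small perturbation \emph{does} have a root, so no such $g$ exists. What you need (and what the paper proves as statement~A) is the contrapositive: if some $1$-perturbation $g$ has no root, then $g/|g|$ extends a map homotopic to $h$; equivalently, non-extendability forces $\rob(f')\ge 1$. Finally, the paper handles the arbitrary norm by scaling the values on $A$ by the norm-equivalence constant $\kappa_n$ (so that $|f'(x)|\ge 1$ on $|A|$), which makes the threshold~$1$ work cleanly in both directions; your ``rescaling'' remark gestures at this but would need to be made explicit along the same lines.
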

In particular, ROB-SAT for PL-functions is undecidable. 
Furthermore, it is easy to see that we have similar undecidability result for  any class of functions $\mathcal{E}$ such that, for any PL function $f$ and $\epsilon>0$,
we might algorithmically find $g\in\mathcal{E}$ such that $\|g-f\|<\epsilon$. For instance, the robustness of the roots cannot be approximated up to any constant factor for systems of polynomial equations, which is quite surprising in light of the fact that the first order theory $(\R, \leq)$ is decidable. While we can always decide the root existence problem for polynomials, we cannot, in general,
decide, whether the robustness of the root is greater than $\epsilon$ or less than $1$. 

The proof of Theorem~\ref{t:undecidability} is based on a recent undecidability result for the topological extension problem~\cite{ext-hard}.

%

If we consider systems of equations and \emph{inequalities}, an analogous undecidability result holds even for homotopically trivial domains.
Let $n>2$ be odd. Then there is no algorithm that, given $m,k\in\N$, a triangulation $T$ of $[-1,1]^m$ and a pair of PL functions $(f,g): T\to\R^n\times \R^k$, correctly chooses at least one of the following:
\begin{itemize}
\item the robustness of $f=0\,\wedge\,g\leq 0$ is greater than
$0$, or,
\item the robustness of $f=0\,\wedge\,g\leq 0$ is less than $1$.
\end{itemize}

An immediate consequence is that the ROB-SAT problem is undecidable for systems of polynomial equations and inequalities,
if all variables are from the interval $[-1,1]$ and no other constraint on the domain is given. 
The details are at the end of Section~\ref{sec:undec}.

\heading{Related work.} By the famous result of Tarski, the first order theory of real number is decidable.
On the countrary, in the unbounded case, the root existence problem is undecidable even for functions $f: \R\to\R$
that are expressed in closed forms containing polynomials and the $\sin$ function. A carefull examination of the
proof in~\cite{Wang:74} shows that also the ROB-SAT problem for such functions is undecidable.

As we mentioned, topological methods have been extensively used to identify roots of systems of equations,
particularly in numerical and interval analysis~\cite{Rall:80,Dian:03,Frommer:2004,Beelitz:05,Frommer:05,Smith:2012}.

Our work was  motivated by preceding papers \cite{Collins:08b,Franek_Ratschan:2012,degpaper}, where the topological degree was proved to be a complete  criterion for detecting \emph{infinitesimally robust} roots (i.e., the roots corresponding to the case $\rob(f)>0$) of nonlinear  systems of $n$ equations over $n$ variables. The approach extends to formulas obtained from systems of equations by adding the inequalities and the universal quantifiers.

The quantification of the robustness of the roots in our sense already appeared in the theory of \emph{well groups}~\cite{Edelsbrunner:2011}. 
Well groups measure in some sense the robustness of intersection of the image of a function $f: X\to Y$ with a closed subdomain $A\subseteq Y$.
As a special but important case with $X=\R^m, Y=\R^n$ and $A=\{0\}$, the zero-dimensional well group capture the robustness of a root of a function $f$. Particularly, Theorem~\ref{t:undecidability} implies incomputability of the well groups in the above special case with $m\geq2n-2$ and $n$ odd.
However, the  algorithms for computing well groups described in
\cite{ComputeWell,chazal} do not  reach beyond the case $m=n$, that is, the case with the same number of equations and variables.
See \cite{Skraba,Chazal:12} for some of the follow-up work.

We also mention the concept of \emph{Nielsen root number} which is a computable number defined for maps between manifolds of the same dimension.
It approximates a lower bound of the number of different solutions of $f(x)=c$~\cite{Brooks:73,Brown:01,Wong:2005}.
Like the topological degree, it is a homotopy invariant and does not change under small perturbations of $f$.
For pairs of maps $f_1, f_2: M^m\to N^n$ between compact manifolds of different dimensions $m,n$, Nielsen theory has been applied
to compute a lower bound for the number of connected components of the coincidence set $f_1'=f_2'$,
where $f_i'\simeq f_i$ are homotopic for $i=1,2$~\cite{Koschorke:2004,Koschorke:2007}.
Surprisingly, the author of \cite{Koschorke:2004} shows that under the condition $m\leq 2n-3$---the same as our stable dimension
range---a computable number $N(f_1, f_2)$ coincides with the minimal number of connected components of $\{f_1'=f_2'\}$ for
$f_1'$ resp. $f_2'$ homotopic to $f_1$ resp. $f_2$.

\heading{Open problems.} 
\begin{itemize}
\item
To what extent can be the present algorithms implemented and used on concrete instances? So far only the computation of the degree has been implemented and practically used for detecting robust roots\footnote{The topological degree gives a complete criterion in the case $\dim K=n$ but it can detect robust roots in other cases as well.} \cite{degpaper}. Can higher non-extendability tests via Steenrod squares and Adem's operation\footnote{The general algorithm for deciding extendability can be seen as a hierarchy of subsequent tests such that $\dim K - n$ of them is needed to get a complete answer. This algorithm has not been implemented yet, but in the case of a sphere, the second and the third tests can be obtained via Steenrod squares and Adem's operation \cite{MosherTangora:CohomologyOperations-1968}. Here the formulas are known and it might be relatively easy to get an efficient implementation.} be practically used?
\item
The undecidability result in Theorem~\ref{t:undecidability} applies to a general class of functions whose domains may have very complicated topology.
If we restrict the function space to functions defined on boxes (products of compact intervals), then the decidability of the ROB-SAT problem for such functions is open.
\item
A~natural generalization of the ROB-SAT problem is to consider first-order formulas, obtained from PL (in)equalities by adding conjunctions, negations and quantifiers.
For a rigorous statement see~\cite{Franek_Ratschan:2012}. Our current solution covers  the existentially quantified (conjunctions of) equations.
\item In the cases where we are able to verify the existence of a robust solution of $f=0$, a natural problem is to describe the zero set.
It is worth investigating the computable topological invariants of the solution set that are robust with respect to perturbations of $f$. In particular, can well groups be computed for $m\leq 2n-3$?
\end{itemize}
\section{Topological preliminaries}
\label{top:prel}
In this section we introduce some definitions from algebraic topology that we need throughout the proofs. The details can be found in standard
textbooks, such as~\cite{Hatcher,matveev2006lectures,spanier1994algebraic}.

We remind  that for a simplicial complex $K$, $|K|$ will refer to the underlying topological space, that is, the union of the simplices in $K$.
If no confusion can arise, we will denote by $K$ both the space and the simplicial complex itself.
All simplicial complexes are assumed to be finite, without explicitly saying it.
\heading{Star, link and subdivision.}
Let $A\subseteq K$ be simplicial complexes. We define the $star(A,K)$ to be the set of all faces of all simplices in $K$ that have
nontrivial intersection with $A$, and $link(A,K):=\{\sigma\in star(A,K)\,|\,\sigma\cap |A|=\emptyset\}$. Both $star(A,K)$ and $link(A,K)$ are simplicial complexes.
The difference $star^\circ(A,K):=|star(A,K)|\setminus |link(A,X)|$ is called \emph{open star}.
A simplicial complex $K'$ is called a \emph{subdivision} of $K$ whenever $|K'|=|K|$
and each $\Delta'\in K'$ is contained in some $\Delta\in K$.
If $a\in |K|$, than we may construct a subdivision of $K$ by replacing the unique $\Delta$ containing $a$ in its interior
by the set of simplices $\{a,v_1,\ldots, v_k\}$
for all $\{v_1,\ldots, v_k\}$ that span a face of $\Delta$, and correspondingly subdividing each simplex containing $\Delta$.  This process is called \emph{starring} $\Delta$ at $a$.
If we fix a point $a_\Delta$ in the interior of each $\Delta\in K$, we may construct a \emph{derived subdivision} $K'$ by starring each
$\Delta$ at $a_\Delta$, in an order of decreasing dimensions.
\heading{Homotopy extension property.}
We say that $f,g: X\to Y$ are \emph{homotopic}, if there exists a map $H: X\times[0,1]\to Y$ such that $H(\cdot,0)=f$ and $H(\cdot, 1)=g$.
The map $H$ is called a~\emph{homotopy}.
Let $A\subseteq X$ and $Y$ be topological spaces. A~map $F: X\to Y$ is called an~\emph{extension} of $f: A\to Y$, if the restriction $F|_A=f$.
The problem of deciding, whether there exists an extension $F$ of $f: A\to Y$, is called the \emph{topological extension problem}.
Let $H: A\times[0,1]\to Y$ be a homotopy. We say that the pair $(X,A)$ has the \emph{homotopy extension property} (HEP) with respect to $Y$,
if for any homotopy
$h: A\times [0,1]\to Y$ and an extension $F: X\to Y$ of $h(\cdot, 0)$, there exists a homotopy $H: X\times [0,1]\to Y$
that is an extension of~$h$.
The majority of common pairs $(X,A)$ possess this property with respect to any $Y$~\cite[p. 76]{Arkowitz:2011} and if $Y$ is finitely triangulable (e.g. the sphere),
then any closed subspace $A$ of a metric space $X$ has the HEP with respect to $Y$~\cite[p. 14]{Hu:59}.
If follows that the existence of an extension $F$ of $f: A\to Y$ depends only
on the \emph{homotopy class} of $f$.
\heading{Extendability of maps into a sphere.}
A map between simplicial complexes $X$ and $Y$ is called \emph{simplicial}, if the image of each simplex is a simplex and its restriction to each simplex is
a linear map.\footnote{This means, $f(\sum_j a_j v_j)=\sum_j a_j f(v_j)$ within a simplex with vertices $v_j$, $a_j\geq 0$ and $\sum_j a_j=1$.}
Every simplicial map $f\:X\to Y$ is uniquely defined by a mapping from the vertices of $X$ to the vertices of $Y$ (and thus can easily be described combinatorially).
Each continuous map $f: |K|\to |L|$ between simplicial complexes $K$ and $L$ can be approximated by a simplicial map
$f^\Delta: K'\to L$, where $K'$ is a subdivision of $K$. In particular, if $f(|star^\circ (v)|)\subseteq |star^\circ(f^\Delta(v))|$,
then $f$ and $f^\Delta$ are homotopic~\cite[p. 137]{Rotman:1988}.

A space $Y$  is called \emph{$(d-1)$-connected} iff every map $S^i\to Y$ for $i\le d-1$ is homotopic to a constant map. Every $d$-sphere $S^d$ is $(d-1)$-connected.
The core of the proofs of Theorems \ref{t:decidability} and \ref{t:undecidability} depends on the following recent results from computational homotopy theory.
\begin{theorem}[{\cite[Theorem 1.4]{polypost}}]
\label{top-existence}
Let $d\ge 2$ be fixed. Then there is a polynomial-time algorithm that, given finite simplicial complexes $X$, $Y$, a subcomplex
$A\subseteq X$, and a simplicial map $f\:A\to Y$,
where $\dim(X)\le 2d-1$ and $Y$ is $(d-1)$-connected,
decides whether $f$ admits an extension to a (not neccessarily simplicial) map $X\to Y$.
\end{theorem}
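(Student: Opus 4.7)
The plan is to proceed by obstruction theory using an effectively built Postnikov tower of the target $Y$. Since $Y$ is $(d-1)$-connected, its Postnikov tower $\cdots\to P_n\to P_{n-1}\to\cdots\to P_d=K(\pi_d(Y),d)$ starts at stage $d$, and each stage fits in a principal fibration $P_n\to P_{n-1}$ with fibre $K(\pi_n(Y),n)$, classified by a $k$-invariant $k_{n-1}\in H^{n+1}(P_{n-1};\pi_n(Y))$. Writing $p_n\: Y\to P_n$ for the canonical projection, the hypothesis $\dim X\leq 2d-1$ forces the homotopy fibre of $p_{2d-1}$ to be $(2d-1)$-connected, so composition with $p_{2d-1}$ is a bijection on homotopy classes of extensions rel $A$ whose source has dimension at most $2d-1$. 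I would therefore first reduce the whole problem to deciding whether $p_{2d-1}\circ f\: A\to P_{2d-1}$ admits an extension to $X$.

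Next I would climb the truncated tower one stage at a time. Suppose one has already constructed $F_{n-1}\: X\to P_{n-1}$ extending $p_{n-1}\circ f$; the obstruction to lifting it to a map $F_n\: X\to P_n$ extending $p_n\circ f$ is a single class $o_n\in H^{n+1}(X,A;\pi_n(Y))$, obtained by comparing the pullback $F_{n-1}^{*}k_{n-1}$ with the corresponding relative class coming from $f$ on $A$. If $o_n=0$ a lift exists and the set of such lifts rel $A$ is a torsor over $H^n(X,A;\pi_n(Y))$; if $o_n\ne 0$ one must backtrack, using that modifying $F_{n-1}$ by a difference cocycle $\delta\in H^{n-1}(X,A;\pi_{n-1}(Y))$ shifts $o_n$ by the image of $\delta$ under a specific cohomology operation built from $k_{n-1}$. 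The crucial feature of the stable range $n\leq 2d-1$ is that this operation is \emph{additive}, so at each stage one only has to decide membership of $o_n$ in the image of a homomorphism between finitely generated abelian groups of cohomology classes---a task which reduces to Smith normal form and can be carried out in polynomial time.

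The main obstacle, and the technical core of \cite{polypost}, is to produce the Postnikov stages $P_d,\ldots,P_{2d-1}$ together with their $k$-invariants and the associated cochain data as objects of size polynomial in $|X|+|Y|$ for each fixed $d$, starting merely from the finite simplicial complex $Y$ and the bare $(d-1)$-connectedness assumption. Even computing the homotopy groups $\pi_d(Y),\ldots,\pi_{2d-1}(Y)$ abstractly is far from trivial; the approach is to build each $P_n$ inductively as a simplicial set with \emph{effective homology} in the sense of Sergeraert, reading off the required homotopy groups and $k$-invariants along the way. The dimension $d$ must remain fixed because the combinatorial size of the Eilenberg--MacLane building blocks grows superpolynomially in their dimension, and without this bound the polynomial size estimates on the intermediate simplicial sets break down. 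Once these polynomial-size effective models are in hand, the obstruction loop above becomes a bounded chain of linear-algebraic computations over $\mathbb{Z}$, yielding the desired polynomial-time decision procedure.
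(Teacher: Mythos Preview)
The paper does not prove this statement at all; it is quoted verbatim as Theorem~1.4 of \cite{polypost} and used as a black box. There is therefore nothing in the present paper to compare your sketch against.

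That said, your outline is a fair high-level description of the strategy in \cite{polypost}: build a finite Postnikov tower of $Y$ with effective homology, then run obstruction theory stage by stage, exploiting the stable range so that the relevant sets of partial extensions carry abelian group structures and the passage between stages is governed by group homomorphisms, reducing everything to integer linear algebra. One point to be careful about: the backtracking is not just ``modify $F_{n-1}$ by a class in $H^{n-1}$ and see how $o_n$ shifts''; in principle one must allow changes at \emph{all} earlier stages simultaneously, and what makes this tractable is precisely that in the stable range the set $[X,P_n]_f$ of partial extensions is itself an abelian group and the tower maps are homomorphisms, so the whole problem becomes a chain of exact sequences of finitely generated abelian groups. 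Your sketch gestures at this but slightly understates the global nature of the backtracking. Also, the polynomial-time construction of the Postnikov stages with effective homology is the genuine technical heart of \cite{polypost}, and ``reading off the required homotopy groups and $k$-invariants along the way'' hides a substantial amount of work; but as a sketch this is acceptable.
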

The bound $\dim(X)\leq 2d-1$ might seem too restrictive but it cannot be removed for even dimensional spheres, for example.
\begin{theorem}[{\cite[Theorem 1.1 (a)]{ext-hard}}]
\label{top-nonexistence}
Let $d  \geq 2$ be even and $S^{d}$ be a~simplicial complex representing the $d$-sphere.
Then the following problem is undecidable: given a simplicial pair
$(X,A)$ such that $\dim X=2d$ and a~simplicial map $f:A\to S^d$, decide whether there exists a continuous extension
$F: X \to S^d$ of $f$ or not.
\end{theorem}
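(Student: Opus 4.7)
The plan is to prove undecidability by reduction from Hilbert's tenth problem, using obstruction theory to convert Diophantine solvability into extendability of maps to $S^d$. The key algebraic ingredient is that for $d$ even, $\pi_{2d-1}(S^d)$ has rank one: the Whitehead product $w = [\iota_d,\iota_d]$ has infinite order (its Hopf invariant equals $2$), so there is a well-defined ``integer coordinate'' on $\pi_{2d-1}(S^d)$ modulo torsion that can serve as the target in which a polynomial expression will be required to vanish.

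Next I would set up the obstruction-theoretic dictionary. If $X$ is built from $A$ by attaching $2d$-cells $e_\alpha$ along maps $\varphi_\alpha\:S^{2d-1}\to A$, and $f\:A\to S^d$ extends over the $(2d-1)$-skeleton (which it automatically does when $A$ has the right form and $S^d$ is $(d-1)$-connected), then the obstruction cochain takes $e_\alpha$ to the class $[f\circ\varphi_\alpha]\in\pi_{2d-1}(S^d)$, and $f$ extends to $X$ iff every such class vanishes. To encode integer unknowns I let $A$ contain a wedge $\bigvee_{i=1}^k S^d_i$ and parametrize maps by degree tuples $\vec n=(n_1,\ldots,n_k)$; then $f_{\vec n}\circ[\iota_i,\iota_j]\simeq n_in_j\,w$ modulo torsion, which realizes integer \emph{multiplication} through Whitehead products, while wedging of attaching maps realizes \emph{addition}. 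Thus any polynomial $p(\vec n)$ can be written as the obstruction value on a single $2d$-cell attached along an appropriate sum of Whitehead products.

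To turn this into a genuine reduction I must produce, from a Diophantine system $p_1=\cdots=p_s=0$, a \emph{single} simplicial pair $(X,A)$ with a \emph{single} simplicial map $f\:A\to S^d$ such that extendability is equivalent to solvability. The degrees $n_i$ should therefore correspond not to the choice of $f$, but to the choice of extension: I enlarge $A$ so that $f|_A$ is fixed and trivial on the relevant wedge summands, while the freedom in extending $f$ over a wedge of $(d+1)$-cells (each with boundary a chosen multiple of $\iota_d$ that kills the fixed class) picks out an arbitrary integer degree on each of $k$ auxiliary $d$-spheres inside $X$. On top of these spheres I attach the ``polynomial'' $2d$-cells realizing $p_1,\ldots,p_s$ as above. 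Any extension of $f$ to $X$ then simultaneously selects $\vec n\in\Z^k$ and certifies $p_l(\vec n)=0$ for all $l$, so extensions exist iff the system is solvable. Finally I would handle the $s$ equations by wedging or by stacking them into a single higher-order expression so that $\dim X=2d$ is preserved.

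The hard part is twofold. First, one must verify the algebraic dictionary rigorously despite the torsion in $\pi_{2d-1}(S^d)$: the equation $f\circ\varphi=0$ must control $p(\vec n)$ on the nose, which requires choosing the Hopf-invariant coordinate and showing that torsion contributions cancel or can be killed by auxiliary cells (for instance by stabilizing via suspension tricks or by multiplying each $p_l$ by the exponent of the torsion subgroup and keeping track of how this alters the reduction from Hilbert's tenth). Second, the entire construction must be carried out simplicially with $f$ simplicial and $\dim X=2d$; this is achieved by replacing each abstract CW attaching map by a sufficiently fine simplicial model---using iterated barycentric subdivision and simplicial approximation relative to a fixed simplicial $S^d$---while checking that no step inflates the dimension. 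Both points are technical but standard once the obstruction-theoretic skeleton is in place, and together they yield the undecidability statement invoked as Theorem~\ref{top-nonexistence}.
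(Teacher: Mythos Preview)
The paper does not prove this theorem at all: it is quoted verbatim as \cite[Theorem~1.1~(a)]{ext-hard} and used as a black box in the reduction of Section~\ref{sec:undec}. So there is nothing in the present paper to compare your proposal against.

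That said, your outline is the right one and matches the strategy of the cited source: a reduction from Hilbert's tenth problem via obstruction theory, using that for $d$ even the Whitehead square $[\iota_d,\iota_d]\in\pi_{2d-1}(S^d)$ has infinite order (Hopf invariant $2$), so that Whitehead products on a wedge $\bigvee_i S^d_i$ encode integer multiplication and wedging of attaching maps encodes addition. You also correctly identify the two genuine technical hurdles---controlling the torsion part of $\pi_{2d-1}(S^d)$ so that vanishing of the obstruction is \emph{equivalent} to the polynomial system being solvable, and making the whole CW construction simplicial with a simplicial $f$ while keeping $\dim X=2d$. Your handling of the first point is a bit vague (``stabilizing via suspension tricks or by multiplying each $p_l$ by the exponent of the torsion subgroup'') and would need to be made precise: in the actual proof one works carefully with the Hopf-invariant homomorphism and arranges the construction so that torsion ambiguity does not affect the equivalence. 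But as a high-level sketch of the argument behind the cited theorem, your proposal is sound.
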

We remark that in the case of $(d-1)$-connected spaces with $d$ odd, the undecidability holds for the wedge of two spheres $S^d\vee S^d$ \cite[Theorem 1.1 (b)]{ext-hard}. The interesting question on the complexity of the case of odd dimensional spheres, where homotopy groups $\pi_n(S^d)$, $n>d$ are all finite, was solved recently by Vok\v{r}\'{i}nek \cite{vokrinek:oddspheres}.

\begin{theorem}[{\cite[Theorem 1]{vokrinek:oddspheres}}]     \label{t:oddspheres}
There exists an algorithm that, given a pair of finite simplicial sets $(X, A)$, a finite $d$-connected simplicial set $Y$ , $d\geq 1$, with homotopy groups $\pi_n(Y)$ finite for all $2d \leq n \leq \dim X$ and a simplicial map $f\: A\to Y$ , decides the existence of a continuous extension $g: X \to Y$ of $f$.
\end{theorem}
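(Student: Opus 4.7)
The plan is to extend the Postnikov-tower methodology of Theorem~\ref{top-existence} beyond the stable range by using the finiteness hypothesis on $\pi_n(Y)$ for $2d \le n \le \dim X$. In the stable range $\dim X \le 2d-1$, the obstruction theory collapses to a single cohomological obstruction. Beyond this range, the obstructions become sensitive to the homotopy class of intermediate extensions, so we must enumerate choices; finiteness of the relevant homotopy groups makes the enumeration finite and algorithmically tractable.

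First I would build an \emph{effective Postnikov tower} for $Y$ up to dimension $\dim X$, producing stages $Y_d \leftarrow Y_{d+1} \leftarrow \cdots \leftarrow Y_{\dim X}$ with $Y_d \simeq K(\pi_{d+1}(Y), d+1)$ and each fibration $Y_{n+1}\to Y_n$ classified by a $k$-invariant $k_n \in H^{n+2}(Y_n;\pi_{n+1}(Y))$. The effective homological algebra on simplicial sets developed in \cite{polypost} gives such a construction when $Y$ is simply connected with finitely generated, computable homotopy groups. Then, in parallel, I would replace the map $f\: A\to Y$ by its composite $f_n\: A\to Y_n$ and attempt to extend it to $X$ stage by stage.

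The inductive step is the heart of the argument. Suppose $f$ has already been extended to a map $F_n\: X\to Y_n$ compatible with $f$ on $A$. The obstruction to lifting $F_n$ to a map $X\to Y_{n+1}$ is the cohomology class $F_n^*(k_n) \in H^{n+2}(X, A;\pi_{n+1}(Y))$; by effective cochain machinery this class is computable, and vanishing can be decided. When it vanishes, the set of lifts up to homotopy rel $A$ is a torsor over $H^{n+1}(X,A;\pi_{n+1}(Y))$, which---critically---is \emph{finite} for $n+1 \ge 2d$ since $\pi_{n+1}(Y)$ is finite by hypothesis and $X$ is a finite complex. Hence I can enumerate representatives of all equivalence classes of lifts $F_{n+1}$, and recurse. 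Below the stable range (i.e. $n+1 < 2d$), standard connectivity arguments allow the choice of lift to be made arbitrarily, so no branching occurs there. The total search tree is finite, with branching only at the finitely many levels $2d \le n+1 \le \dim X$, and the extension exists if and only if some branch survives to the top.

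The main obstacle is to make the enumeration of lifts \emph{effective} at each Postnikov stage and to ensure that the subsequent $k$-invariants can be pulled back and tested against each candidate in the recursion. One must represent each candidate $F_{n+1}$ by a concrete simplicial (or simplicial-set-level) cocycle, compose it with $k_{n+1}$, and compute the resulting cohomology class---this demands extending the effective cochain calculus of \cite{polypost} to support pullbacks along variable maps, which is the technical core of the argument. Once this effective bookkeeping is in place, correctness follows from standard obstruction theory and termination follows from the two finiteness facts: the Postnikov tower is finite in the relevant range, and each branching factor is bounded by $|H^{n+1}(X,A;\pi_{n+1}(Y))| < \infty$.
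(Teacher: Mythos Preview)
First, note that the paper does not itself prove this theorem: it is quoted verbatim from \cite{vokrinek:oddspheres} and used as a black box in the proof of Theorem~\ref{t:decidability}. So there is no ``paper's own proof'' to compare against; any comparison must be with Vok\v{r}\'{i}nek's original argument.

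Your high-level plan---build an effective Postnikov tower and lift stage by stage, enumerating lifts whenever the coefficient group $\pi_{n+1}(Y)$ is finite---is indeed the backbone of the approach. However, the sentence ``Below the stable range (i.e.\ $n+1<2d$), standard connectivity arguments allow the choice of lift to be made arbitrarily, so no branching occurs there'' hides the main difficulty and, taken literally, is false. For $d+1\le n+1\le 2d-1$ the group $\pi_{n+1}(Y)$ is \emph{not} assumed finite, so the torsor $H^{n+1}(X,A;\pi_{n+1}(Y))$ of lifts may well be infinite; and the choice of lift $F_n$ genuinely affects the next obstruction $F_n^*(k_n)$. There is no connectivity argument that makes the choice irrelevant. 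What the machinery of \cite{polypost} actually gives you is the set $[X,Y_n]_A$ in the stable range as a fully effective (possibly infinite) abelian group---not a single canonical element---and your sketch provides no mechanism for deciding which of its infinitely many elements lift through the remaining unstable stages.

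That transition from the stable to the unstable part of the tower is precisely the crux of \cite{vokrinek:oddspheres}, and it requires substantially more than bookkeeping. One has to exploit that the higher obstruction maps, viewed as functions on the stable abelian group, take values in cohomology with \emph{finite} coefficients and hence factor through a computable finite quotient; only then does the search tree become finite. Without this reduction your algorithm either makes an unjustified arbitrary choice (and may answer incorrectly) or has an infinite level and fails to terminate. So the proposal has a genuine gap exactly at the point where the hypothesis ``$\pi_n(Y)$ finite for $2d\le n\le\dim X$'' is supposed to do its work.
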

We remark that there is no running time bound in Theorem \ref{t:oddspheres} even for fixed dimensions of $X$. We also believe that there is much more hope for practical implementations in the stable situation
(Theorem \ref{top-existence}) than in the unstable situation (Theorem \ref{t:oddspheres}).

We will need a slight extension of Theorem~\ref{top-existence} for maps to low-dimensional spheres $S^1$ and $S^0$.
The results of \cite{polypost} do not formally cover these cases mainly because they are too particular and not difficult at the same time.
We will need some basic terminology and facts from algebraic topology in the proof, such as homotopy and cohomology theory.
We point the interested reader to the textbooks  \cite{Hatcher, prasolov}.

\begin{lemma}
\label{low_dim}
There is a polynomial-time algorithm that, given a simplicial complex $X$ (of any dimension), $k\leq 1$
and a simplicial map $f:A\to S^k$ from a subcomplex $A$ of $X$, decides whether  there exists
a continuous extension (not necessarily simplicial) $F: X\to S^k$ of $f$.
\end{lemma}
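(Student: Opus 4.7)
The plan is to handle the cases $k=0$ and $k=1$ separately, each reducing to elementary combinatorial or linear-algebraic computation.

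For $k=0$, the sphere $S^0$ is a two-point discrete space, so any continuous map into $S^0$ is locally constant. Hence $f$ extends iff on every connected component $C$ of $X$ the restriction $f|_{A\cap C}$ is constant; components disjoint from $A$ can be mapped arbitrarily. The connected components of $X$ can be obtained in linear time by a search on its $1$-skeleton, after which the constancy check is immediate.

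The substantive case is $k=1$. Here I would exploit that $S^1$ is an Eilenberg--MacLane space $K(\Z,1)$, so that pulling back a chosen generator $\omega\in H^1(S^1;\Z)$ yields a natural bijection $[B,S^1]\cong H^1(B;\Z)$ for any CW-complex $B$. Because the simplicial pair $(X,A)$ has the HEP with respect to $S^1$, the existence of an extension depends only on the homotopy class of $f$, which is equivalent to asking that $[f]\in H^1(A;\Z)$ lie in the image of the restriction $H^1(X;\Z)\to H^1(A;\Z)$. By the long exact cohomology sequence of the pair, this is in turn equivalent to $\delta[f]=0$ in $H^2(X,A;\Z)$, where $\delta$ is the connecting homomorphism.

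The algorithm for $k=1$ then proceeds as follows: (i)~fix a coherent cyclic orientation of the edges of the simplicial circle $S^1$ and let $\omega$ be the $1$-cocycle taking value $1$ on one chosen oriented edge and $0$ on all others; (ii)~compute the pullback cocycle $c:=f^*\omega\in Z^1(A;\Z)$ edge by edge, with appropriate signs, and value $0$ on edges that $f$ collapses to a vertex; (iii)~extend $c$ to a cochain $\tilde c\in C^1(X;\Z)$ by setting $\tilde c(e)=0$ on edges of $X\setminus A$; (iv)~form $\delta\tilde c\in C^2(X;\Z)$, which vanishes on every $2$-simplex of $A$ because $c$ is already a cocycle there, and hence represents a relative cocycle in $Z^2(X,A;\Z)$; (v)~decide whether $[\delta\tilde c]=0$ in $H^2(X,A;\Z)$ by computing the Smith normal form of the relative coboundary map $\delta\colon C^1(X,A;\Z)\to C^2(X,A;\Z)$ and testing membership in its image. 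All integers involved have size polynomial in the input, and Smith normal form of an integer matrix is computable in polynomial time, so the entire procedure runs in polynomial time.

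The main technical point, rather than a deep obstacle, is verifying the cohomological shortcut: namely that $f^*\omega$ represents $[f]$ under the classification $[A,S^1]\cong H^1(A;\Z)$ and that the cochain-level $\delta$ above really computes the connecting homomorphism of the pair. Both are standard consequences of cellular obstruction theory applied to $K(\Z,1)$. A minor bookkeeping convention is that pulling back across a simplicial map that collapses an edge yields value $0$, consistent with the fact that the restriction of $f$ to such an edge is nullhomotopic.
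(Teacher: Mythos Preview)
Your proposal is correct and follows essentially the same approach as the paper. Both arguments handle $k=0$ by checking constancy on connected components, and for $k=1$ both invoke $S^1\simeq K(\Z,1)$ to translate extendability of $f$ into the question of whether the cocycle $f^*\omega\in Z^1(A;\Z)$ extends to a cocycle on $X$; the only cosmetic difference is that the paper phrases the resulting integer-linear problem directly as a system of Diophantine equations, while you repackage it via the connecting homomorphism $\delta\colon H^1(A)\to H^2(X,A)$ and a Smith-normal-form membership test---but these are the same computation.
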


\begin{proof}
First consider the case $k=0$.
We identify the connected components $X_1,X_2,\ldots,X_j$ of the complex $X$ and the subcomplexes $A_1,\ldots,A_j\subseteq A$ such that $A_i= X_i\cap A$ for $i=1,\ldots,j$.
It is straightforward that a simplicial map $f\:A\to S^0$ can be extended to $X$ iff $f$ is constant on every subcomplex $A_i$ for $i=1,\ldots,j$.

Now let $k=1$.
It is well known fact that the circle $S^1$ is the Eilenberg--MacLane space $K(\Z,1)$ and\footnote{The Eilenberg--MacLane space $K(\Z,1)$ is defined by $\pi_1(K(\Z,1))=\Z$ and $\pi_i(K(\Z,1))=0$ for every $i\neq 1$.
} that there is a natural bijection $[A,S^1]\to H^1(A;\Z)$.\footnote{For spaces $X$ and $Y$, we let $[X,Y]$ denote the set of homotopy classes of of maps $X\to Y$.} The bijection sends
the homotopy class of a continuous map $f\:A\to S^1$ to the cohomology class $f^*(\xi)$ where $f^*\:H^1(S^1;\Z)\to H^1(A;\Z)$ is the homomorphism induced by $f$ and $\xi\in H^1(S^1;\Z)$ is a certain distinguished cohomology class. 
In our case where $S^1=|\Sigma^1|,$ the cohomology class $\xi$ can be represented by a cocycle that assigns $1$ to the directed edge (\emph{ordered} simplex) $\overrightarrow{e_1e_2}$  of a~simplicial
complex $\Sigma^1$ and assigns $0$ to all the remaining directed edges. Thus $f^*(\xi)$ can be represented by a cocycle that assigns $1$ to every directed edge  $\overrightarrow{uv}$ such that $f(u)=e_1$ and $f(v)=e_2$ and $0$ to all the remaining ones. In the end, the question of extendability of a map $f\:A\to S^1$ reduces to extendability of the  cocycle on $A$ representing $f^*(\xi)$ to a cocycle on $X$.
This problem reduces to a system of linear Diophantine equations that can be solved in polynomial time \cite[Chapter~5]{Schrij86}.
\end{proof}

\section{Decidability}
\label{sec:decidability}
\heading{Realization of the sphere.}
Let us represent the sphere $S^{n-1}$ as the boundary of the cross  polytope, i.e., the convex hull of  $2n$ coordinate $\pm$ unit vectors $\pm e_1,\ldots, \pm e_n$.
This coincides with the set of unit vectors in $\R^n$ with respect to the $\ell_1$-norm $|x|_1:=\sum_i |x_i|$.
Let
$$r: \R^n\setminus\{0\}\to S^{n-1},\quad x\mapsto \frac{x}{|x|_1}.$$
This map is a \emph{homotopy equivalence} and the embedding $i: S^{n-1}\hookrightarrow \R^n\setminus\{0\}$ is its \emph{homotopy inverse}.
This implies that, for any simplicial complex pair $(X,A)$, there exists an extension $F: X\to \R^{n}\setminus\{0\}$
of $f: A\to \R^{n}\setminus\{0\}$ iff there exists an extension $\tilde{F}: X\to S^{n-1}$ of $r\circ f: A\to S^{n-1}$.

We choose the triangulation $\Sigma^{n-1}$ of $S^{n-1}$ consisting of all simplices spanned by subsets of the $2n$ vertices $\{\pm e_i\}$
that do not contain a pair of antipodal points $\{e_i, -e_i\}$.
For each index $i$ and sign $s$, $star(s e_i)$ is a triangulation of the hemisphere $\{x\in S^{n-1}\,|\,s\,x_i\geq 0\}$.

\heading{Robustness of the root.} Let $K$ be a compact space, $f: K\to \R^n$ be a continuous function and $g$ be an $\alpha$-perturbation of $f$ with no root.
Then $\min_K |g|>0$, so there exists an $\epsilon>0$ such that $\epsilon f+ (1-\epsilon) g$ has no root. However,
$\|\epsilon f + (1-\epsilon) g - f \|=(1-\epsilon) \|f-g\| \leq (1-\epsilon)\alpha$, so the set of all $\alpha'$ such that there exists
an $\alpha'$-perturbation of $f$ with no root, is an open set. It follows that the set of $\alpha$ such that each $\alpha$-perturbation of
$f$ has a root, is a closed set. It is clearly bounded, which justifies the following definition.
\begin{definition}
\label{def: robustness of f}
For a compact space $K$ and a continuous $f: K\to\R^n$ that has a root in $K$, we define
$$\rob(f):=\max \{\alpha\,|\,\text{each $\alpha$-perturbation of $f$ has a root}\}.$$
\end{definition}
For the purposes of this paper, we will need the following definition.
\begin{definition}
\label{def: critical}
Let $K$ be a simplicial complex and $f: K\to\R^n$ be a PL function. We will call $a\in\R$ a \emph{critical value of $f$}, if $a=\min_{x\in\Delta} |f(x)|$ for some $\Delta\in K$.
\end{definition}
We will show later that $\rob(f)$ is always a critical value of $f$.

\heading{Reduction of the ROB-SAT problem to the extension problem.}
Assume that $f: K\to\R^n$ is an arbitrary continuous function on a compact space $K$. 
There is the following general criterion for the existence of a robust root.

\begin{lemma}[Extendability criterion]\label{lemma:robust2ext}
A continuous function $f\:K\to\R^n$ has an $\alpha$-robust root if and only if its restriction to $|f|^{-1}\{\alpha\}$ cannot be extended to a map $|f|^{-1}[0,\alpha]\to\R^n\setminus\{0\}$.\end{lemma}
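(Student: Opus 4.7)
The plan is to prove both directions of the equivalence by contrapositive. Write $A := |f|^{-1}\{\alpha\}$ and $B := |f|^{-1}[0,\alpha]$, and let $S^{n-1}$ denote the unit sphere with respect to the fixed norm $|\cdot|$ on $\R^n$. So the task splits into showing that an extension $F\: B \to \R^n\setminus\{0\}$ of $f|_A$ exists if and only if there is an $\alpha$-perturbation of $f$ with no root.

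For the direction ``the extension exists $\Rightarrow$ $f$ has no $\alpha$-robust root'', I would start with an extension $F \: B \to \R^n \setminus \{0\}$ of $f|_A$ and construct an explicit $\alpha$-perturbation of $f$ with no root. A naive construction like ``$g := F$ on $B$ and $g := f$ on $K \setminus B$'' only gives $\|g - f\| \leq 2\alpha$, which is insufficient, so a sphere-normalization trick is required. Define $\phi : K \to S^{n-1}$ by setting $\phi(x) := F(x)/|F(x)|$ on $B$ and $\phi(x) := f(x)/|f(x)|$ on $|f|^{-1}[\alpha, \infty)$. Both expressions collapse to $f(x)/\alpha$ on $A$ because $|F(a)| = |f(a)| = \alpha$ there, so $\phi$ is continuous. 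Then $g := f + \alpha\phi$ is a continuous map with $\|g - f\| = \alpha$ automatically, and a three-region case analysis (over $|f|<\alpha$, $|f|=\alpha$, and $|f|>\alpha$) shows that $g$ has no root: on the first region $|g| \geq \alpha - |f| > 0$; on the second $g = 2f \neq 0$; on the third $g$ is a positive scalar multiple of $f$.

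For the reverse direction, take an $\alpha$-perturbation $g : K \to \R^n \setminus \{0\}$ of $f$ and restrict to $g|_B : B \to \R^n \setminus \{0\}$. On $A$, the maps $g|_A$ and $f|_A$ are never antipodal within $\R^n \setminus \{0\}$: if $g(a) = -\lambda f(a)$ for some $\lambda > 0$, then $|g(a) - f(a)| = (1+\lambda)\alpha > \alpha$, violating the perturbation bound. Hence the straight-line homotopy from $g|_A$ to $f|_A$ stays inside $\R^n \setminus \{0\}$. Invoking the homotopy extension property of the pair $(B, A)$ with respect to $\R^n \setminus \{0\}$---which holds because $A$ is closed in the metric space $B$ and $\R^n \setminus \{0\}$ is homotopy equivalent to the finitely triangulable $S^{n-1}$ (see Section~\ref{top:prel})---I can extend this homotopy to $\tilde H : B \times [0,1] \to \R^n \setminus \{0\}$ starting at $g|_B$ and take $F := \tilde H(\cdot, 1)$ as the desired extension of $f|_A$.

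The main obstacle is the forward direction: naive interpolations lose a factor of $2$ in the perturbation norm, and the clean fix is to replace $F$ by its radial projection $\phi = F/|F|$ so that the additive shift $\alpha\phi$ produces a perturbation of norm exactly $\alpha$. A secondary care point is confirming HEP for the pair $(B, A)$: although $A$ need not be a simplicial subcomplex of $K$, it is closed in the metric space $B$, and the target is homotopy equivalent to a finite simplicial complex, so the HEP statement quoted in the preliminaries applies.
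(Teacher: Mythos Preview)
Your proof is correct. For the direction ``perturbation without root $\Rightarrow$ extension exists'' you follow essentially the paper's argument: the straight-line homotopy between $g|_A$ and $f|_A$ stays in $\R^n\setminus\{0\}$ (the paper verifies this by the direct estimate $|H(x,t)|\ge(1-t)\alpha$ rather than your antipodality phrasing, but the content is the same), and then HEP finishes.

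For the converse, however, your construction is genuinely different and more economical than the paper's. The paper proceeds in several stages: it first finds a neighborhood $U$ of $A$ in $B$ on which $|F-f|<\alpha/2$, chooses $\epsilon$ with $|f|^{-1}[\alpha-\epsilon,\alpha]\subseteq U$, invokes HEP a \emph{second} time to obtain an extension $F_1$ of $f|_{|f|^{-1}\{\alpha-\epsilon\}}$ agreeing with $f$ near the boundary, then rescales $F_1$ by a carefully chosen scalar function to force the perturbation bound, and finally glues with $f$ on $|f|^{-1}(\alpha,\infty)$. Your one-line formula $g=f+\alpha\phi$ with $\phi=F/|F|$ on $B$ and $\phi=f/|f|$ elsewhere achieves the same conclusion with no second use of HEP and no $\epsilon$-shrinking: the normalization $|\phi|\equiv 1$ pins $\|g-f\|=\alpha$ exactly, and the three-case check is immediate. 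What the paper's longer route buys is a perturbation with \emph{strict} inequality $|g(x)-f(x)|<\alpha$ pointwise; your $g$ attains equality everywhere. Since the definition of $\alpha$-perturbation uses $\le\alpha$, this makes no difference for the lemma as stated.
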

\begin{proof}
We will prove the equivalent statement that there exists an $\alpha$-per\-tur\-ba\-tion of $f$ with no root iff there exists an extension
$F: |f|^{-1}[0, \alpha]\to\R^n\setminus\{0\}$ of $f|_{|f|^{-1}\{\alpha\}}$.

Assume that $\tilde{f}: |f|^{-1}[0, \alpha] \to \R^n\setminus\{0\}$ is an $\alpha$-perturbation of $f$ with no root and
let $H(x, t):=t\tilde{f}(x) + (1-t) {f}(x)$ be a straight-line homotopy.
We will show that $H(x,t)\neq 0$ for $(x,t)\in |f|^{-1}\{\alpha\}\times [0,1]$. Using $|\tilde{f}(x)-{f}(x)|\leq\alpha$,
\begin{align*}
|H(x,t)|\,&= \,| t \tilde{f}(x) + (1-t) {f}(x)| \\
          &=|{f}(x)+t(\tilde{f}(x)-{f}(x))| \\
          &\geq |{f}(x)| - t |\tilde{f}(x)-{f}(x)| \\
          & \geq \alpha-t\alpha= (1-t)\alpha.
\end{align*}
This is positive for $t<1$. For $t=1$, $|H(x,1)|=|\tilde{f}(x)|>0$, because $\tilde{f}$ has no root.
So, $f|_{|f|^{-1}\{\alpha\}}$ and $\tilde{f}|_{|f|^{-1}\{\alpha\}}$ are homotopic maps from $|f|^{-1}\{\alpha\}$ to $\R^n\setminus\{0\}$.
The map $\tilde{f}: |f|^{-1}[0, \alpha]\to\R^n\setminus\{0\}$ is an extension of $\tilde{f}|_{|f|^{-1}\{\alpha\}}$
and due to the homotopy extension property of the pair $(|f|^{-1}[0,\alpha], |f|^{-1}\{\alpha\})$ wrt. the sphere,
there exists an extension $F: |f|^{-1}[0,\alpha]\to\R^n\setminus\{0\}$ of $f|_{|f|^{-1}\{\alpha\}}$.

Conversely, assume that there exists an extension $F: |f|^{-1}[0,\alpha]\to\R^n\setminus\{0\}$ of $f|_{|f|^{-1}\{\alpha\}}$.
Let $U$ be an open neighborhood of $|f|^{-1}\{\alpha\}$ in $|f|^{-1}[0, \alpha]$ such that for $x\in U$, $|F(x)-f(x)| < \alpha/2$.
Due to the compactness of $|f|^{-1}[0, \alpha]$, there exists $\epsilon\in (0,\alpha/2)$ such that $|f|^{-1}[\alpha-\epsilon, \alpha]\subseteq U$
(otherwise, there would exist a sequence $x_n\notin U$, $|f(x_n)|\to\alpha$ and a convergent subsequence $x_{j_n}\to x_0$,
whereas $x_0\in |f|^{-1} \{\alpha\}\subseteq U$, contradicting $x_{j_n}\notin U$). The map
$F|_{|f|^{-1}\{\alpha-\epsilon\}}$ is an $\alpha/2$-perturbation of $f|_{|f|^{-1}\{\alpha-\epsilon\}}$,
and an elementary calculation shows that they are homotopic as maps from $|f|^{-1}\{\alpha-\epsilon\}$ to $\R^n\setminus\{0\}$
by the straight-line homotopy. The homotopy extension property implies that $f_{|f|^{-1}\{\alpha-\epsilon\}}$
can be extended to a map $F_1: |f|^{-1}[0,\alpha]\to \R^n\setminus\{0\}$.
Without loss of generality, we may assume that $F_1(x)=f(x)$
for $x\in |f|^{-1}[\alpha-\epsilon, \alpha]$.

Let $g: |f|^{-1}[0,\alpha]\to (0,1]$ be a scalar function such that
$g(x)=1$ on $|f|^{-1}\{\alpha\}$ and $g(x)<\frac{\epsilon}{2 |F_1(x)|}$ on $|f|^{-1}[0, \alpha-\epsilon)$. Define the function
$F_2(x):=g(x) F_1(x)$ from $|f|^{-1}[0,\alpha]$ to $\R^n\setminus\{0\}$.
If $x\in |f|^{-1}[\alpha-\epsilon, \alpha]$, then
\begin{align*}
|F_2(x)-f(x)|&=|f(x) g(x) - f(x)|=|f(x)|\, |g(x)-1| \\
             &\leq |f(x)| < \alpha
\end{align*}
and if $x\in |f|^{-1}[0, \alpha-\epsilon)$, then
\begin{align*}
|F_2(x)-f(x)|&\leq |F_2(x)| + |f(x)|  \leq |F_1(x)| \frac{\epsilon}{2 |F_1(x)|} + \\
             & +\alpha-\epsilon = \alpha-\frac{\epsilon}{2} < \alpha
\end{align*}
which shows that $F_2$ is an $\alpha$-perturbation of $f$. By definition, $F_2(x)=f(x)$ for $x\in |f|^{-1}\{\alpha\}$, so
the function $F_3: K\to\R^n$ defined by
$$
F_3(x):=\begin{cases}
F_2(x)\quad\mathrm{for}\,\, x\in |f|^{-1} [0, \alpha] \\
f(x) \quad \mathrm{for}\,\, x\in |f|^{-1} (\alpha, \infty)
\end{cases}
$$
is a continuous $\alpha$-perturbation of $f$ with no root.
\end{proof}

The straightforward aim would be to apply the algorithm of Theorem~\ref{t:decidability} to the extension problem from the previous lemma. We would need to triangulate the spaces
$|f|^{-1}(\alpha)$ and $|f|^{-1}[0,\alpha]$---a feasible goal in the case $|\cdot|=|\cdot|_\infty$. However, we prefer a solution that covers larger class of norms, and in particular, the Euclidean norm. There it is less obvious how to work with the relevant spaces and maps. But we can use the fact that  extendability is invariant under homotopy equivalence. It will be enough to  apply the algorithm of Theorem~\ref{t:decidability} on ``combinatorial approximations'' of the spaces and maps from the previous lemma.

Let $f\:K\to\R^n$ be a PL function and assume, in addition, that
for each $\Delta\in K$, the function $|f|$ takes both the minimum and maximum on $\Delta$ in a vertex. This can be achieved for an arbitrary PL function by taking a derived subdivision of $K$. (The procedure will be detailed in the proof of Theorem~\ref{t:decidability}.)
We define an auxiliary PL function $\chi\:K\to\R$ by
$$\chi(v):=\begin{cases}0 & \text{when }|f(v)|<\alpha; \\
1/2 & \text{when } |f(v)|=\alpha; \\
1 & \text{when }|f(v)|>\alpha. \\
\end{cases} $$
Let us put $X:=\chi^{-1}[0,1/2]$ and $A:=\chi^{-1}(1/2)$.
\begin{lemma}[Combinatorial approximation]
\label{homeomorphism}
Let $X$ and $A$ be defined as above. Then there is a homeomorphism $h:X\to |f|^{-1}[0,\alpha]$ such that $h|_A$ is a homeomorphism $A\to|f|^{-1}\{\alpha\}$. Moreover, there is a homotopy $H:h\sim\id$ such that $f\circ H$ has no root on $A$.
\end{lemma}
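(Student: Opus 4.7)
The plan is to construct $h$ and $H$ simplex by simplex, exploiting the convexity of $|f|$ on each $\Delta\in K$ (which holds because $f|_\Delta$ is affine and $|\cdot|$ is a norm, hence convex). On each $\Delta$, both $|f|^{-1}[0,\alpha]\cap\Delta$ and $X\cap\Delta=\chi^{-1}[0,1/2]\cap\Delta$ are convex compact subsets, and the hypothesis that $|f|$ attains its extrema on each simplex at vertices ensures the two regions contain exactly the same ``low-norm'' subfaces of $\Delta$ (those spanned by vertices with $|f(v)|\le\alpha$). So it is plausible to match them by a canonical homeomorphism, boundary to boundary.

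I partition the vertices of each $\Delta$ into $V_-\cup V_0\cup V_+$ according to $|f(v)|<\alpha$, $=\alpha$, $>\alpha$, and view $\Delta$ as the join of $\Delta_-:=\mathrm{conv}(V_-\cup V_0)$ and $\Delta_+:=\mathrm{conv}(V_+)$. Each point of $\Delta$ outside $\Delta_-\cup\Delta_+$ decomposes uniquely as $x=(1-t)y+tz$ with $y\in\Delta_-$, $z\in\Delta_+$ and $t\in(0,1)$; assuming $V_0=\emptyset$ this gives $\chi(x)=t$. On the segment $\sigma_{y,z}$ from $y$ to $z$, the convex function $s\mapsto|f((1-s)y+sz)|$ rises from $|f(y)|<\alpha$ to $|f(z)|>\alpha$, so there is a first-crossing parameter $s^*=\psi(y,z)$ at which it equals $\alpha$ and stays $\ge\alpha$ beyond. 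I set
\[
h_\Delta(x):=(1-2s^*t)\,y+2s^*t\,z
\]
on $X\cap\Delta$, and the identity on $\Delta_-$. A routine verification shows that $h_\Delta$ is a homeomorphism onto $|f|^{-1}[0,\alpha]\cap\Delta$ sending $A\cap\Delta$ onto $|f|^{-1}\{\alpha\}\cap\Delta$. Because the join decomposition and $\psi$ depend only on the segment $\sigma_{y,z}$ and on $f$, the $h_\Delta$ agree on common faces and glue into the global homeomorphism $h$.

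For the homotopy I take the straight-line homotopy $H(x,s):=(1-s)h(x)+sx$, well-defined within each $\Delta$. For $x\in A$, both $x$ and $h(x)$ lie on a common segment $\sigma_{y,z}$ at parameters $1/2$ and $s^*$ respectively, so $H(x,\cdot)$ traces the subsegment between them. The nonvanishing of $f\circ H$ on $A$ then reduces to an elementary fact about affine maps: if $f$ vanishes at some parameter $s_0$ on $\sigma_{y,z}$, then $f(y)$ and $f(z)$ must be antiparallel, and a short computation gives $s_0=|f(y)|/(|f(y)|+|f(z)|)<1/2$ (since $|f(y)|<\alpha<|f(z)|$) and $s^*-s_0=\alpha/(|f(y)|+|f(z)|)>0$. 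Hence $s_0$ lies strictly below both $1/2$ and $s^*$, outside the homotopy path, so $|f|>0$ throughout.

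The main technical obstacle is the degenerate case $V_0\ne\emptyset$, in which $|f(y)|$ may equal $\alpha$ for some $y\in\Delta_-$ and the first-crossing parameter $s^*$ collapses to $0$, breaking the formula for $h_\Delta$. I expect to handle this by a finer case analysis: on the subface spanned by $V_-\cup V_0$ we already have $|f|\le\alpha$ by the max-at-vertex hypothesis, so $h_\Delta$ can be taken to be the identity there, and the first-crossing construction then applies on the segments $\sigma_{y,z}$ with $y$ in the relative interior of $\Delta_-$, where $|f(y)|<\alpha$. A secondary issue is the careful verification that the $h_\Delta$ agree on shared faces, which is a routine unraveling of barycentric coordinates.
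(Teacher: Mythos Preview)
Your approach is essentially the same as the paper's: a simplex-by-simplex construction that reparametrizes along segments from the low-$|f|$ face to the high-$|f|$ face, sending the $\chi=1/2$ boundary to the $|f|=\alpha$ boundary. The straight-line homotopy and the no-root argument on $A$ via the position of the unique zero on an affine segment are also in the same spirit as the paper's treatment.

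The one substantive difference is the handling of the case $V_0\ne\emptyset$, which you correctly flag as the main obstacle. The paper does \emph{not} lump $V_0$ into $\Delta_-$; instead it keeps a three-way decomposition $x=ay+bz+cw$ with $y\in\sigma=\mathrm{conv}(V_-)$, $z\in\tau=\mathrm{conv}(V_+)$, $w\in\rho=\mathrm{conv}(V_0)$, and runs the reparametrization along segments parallel to $\overline{yz}$ with the $\rho$-component $cw$ held fixed. On each such segment $\chi$ still crosses $1/2$ exactly at the midpoint, so the clean formula survives uniformly, and the map restricts to the identity on $\rho$ automatically. This is exactly the cleaner fix you are groping toward.

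Your two-way join plus the patch ``identity on $\Delta_-$, first-crossing elsewhere'' is not quite enough as written: once $y\in\Delta_-$ carries positive weight on $V_0$, you have $\chi(y)>0$, so $\chi(x)\ne t$ and the $A$-boundary on $\sigma_{y,z}$ is at some $t_A(y)<1/2$ rather than $1/2$. Your formula $h_\Delta(x)=(1-2s^*t)y+2s^*tz$ then no longer sends $A\cap\Delta$ to $|f|^{-1}\{\alpha\}\cap\Delta$, and the continuity of the corrected map as $y$ tends to the $V_0$-face (where both $t_A$ and $s^*$ tend to $0$) needs a separate argument. None of this is fatal, but the three-way decomposition avoids it entirely.
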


\begin{proof}
The homeomorphism $h$ and the homotopy $H$ are defined simplexwise. Let $\Delta$ be a simplex of $K$.
Let the face of $\Delta$ spanned by vertices $v$ with
$\chi(v)=0$, resp. $\chi(v)=1$ , $\chi(v)=1/2$ be denoted by $\sigma$, resp. $\tau, \rho$.
First, if $\tau$ is empty, we define $H$  constant on every point of $\Delta=\Delta\cap X$. Second, if $\sigma$ is empty, then $\Delta\cap X=\rho=\Delta\cap|f|^{-1}\{\alpha\},$ as $\alpha$ is the minimum of $|f|$ on $\Delta$ and is attained in every point of $\rho$. In this case we define $H$ constant as well.
Note that if both $\rho$ and $\sigma$ were empty, the minimum of $|f|$ of $\Delta$ would be greater than $\alpha$ and thus $X\cap\Delta=\emptyset.$

Assume that $\sigma, \tau$ are nonempty and $x\in\Delta\cap X$. If $x\in\sigma$ or $x\in\rho$, then
we define the homotopy $H(x,t)=x$ consistently with the above description of $H$ on the faces $\tau$, $\rho$.
Assume that $x$ is neither in $\sigma$ nor in $\rho$.
If $\rho$ is empty, then there exist
unique numbers $a,b> 0$, $a+b=1$ such that $x=ay + bz$ for uniquely defined points $y\in\sigma$ and $z\in\tau$.
Similarly, if $\sigma, \tau, \rho$ are all nonempty, then there exist
unique coefficients $a,b,c> 0$, $a+b+c=1$ such that $x=ay+bz+cw$ for uniquely defined $y\in\sigma$, $z\in \tau$, $w\in\rho$.

We define a line segment $s$ by $s(t)=(1-c)(ty+(1-t)z)+cw$ for $t\in[0,1]. $ It is the segment  in $\Delta$ parallel to $\overline{yz}$ passing through $x$. Let $y':=s(0)$ denote the starting point and $z':=s(1)$ denote the endpoint of $s$. We observe that \begin{itemize}
\item
$s$ intersects $A$ in a unique point $x_A:=s(1/2)$;
\item $s$ intersects $|f|^{-1}\{\alpha\}$ in a unique point $x_\partial$. This holds because $|f|$ is convex on $\Delta$, $|f(y')|<\alpha$ and $|f(z')|>\alpha$.
\end{itemize}

\begin{figure}
\includegraphics[page=2]{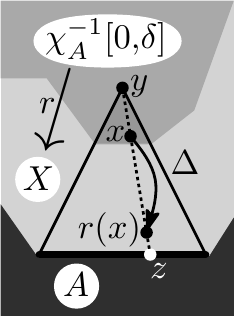}
\caption{Illustration of the Lemma~\ref{homeomorphism} in the case $|\cdot|=|\cdot|_2.$ The arrows represent the map $h=H(\cdot,1)$.
}
\label{fig: homeomorphism}
\end{figure}
A simple situation with all $\sigma,\tau$ and $\rho$ being singletons can be seen in Figure~\ref{fig: homeomorphism}.

We define
$$H(x,t):=x+ t(x_\partial-x_A)\frac{|y'-x|}{|y'-x_A|}.$$
This homotopy continuously stretches the segment $s$ in such a way that $y'$ is fixed and $x_A$ is sent to $x_\partial$ in $t=1$. We have that $H(\cdot, 0)$ is the identity
and the final map $h=H(\cdot, 1)$ is a~bijection---it bijectively stretches each segment $\overline{y'x_A}$ onto $\overline{y' x_\partial}$ (see Figure~\ref{fig: homeomorphism}).
For each $x\in \Delta\cap X$, $x\notin\sigma\cup\rho$, there exist unique $y', z', x_A$ and $x_\partial$ and they depend continuously on $x$.
Moreover, it is routine to check that the definition of $H$ on each $\Delta$ is compatible with its  definition on every face $\Delta'<\Delta$.
This altogether proves that both $H$ is continuous. So, the map $H(t,\cdot)$ is a continuous bijection of $X$ and its image
and the compactness of $X$ implies that $H(t,\cdot)^{-1}$ is continuous as well. In particular, $H(1,\cdot)=h$ is a continuous
bijection of $X$ and $|f|^{-1}[0,\alpha]$ and hence a homeomorphism.
Properties $h(|A|)\subseteq |f|^{-1}\{\alpha\}$ and $h^{-1}(|f|^{-1}\{\alpha\})\subseteq |A|$ follow from the construction.

Finally, once $\sigma$ is empty, then trivially $f\big(H(x,t)\big)\neq 0$. Otherwise for $x\in A$ the value $H(x,t)$
always belongs to the line segment $\overline{x_A x_\partial}$ that doesn't contain a root of $f$. 
\end{proof}
\begin{corollary}\label{col:homeomorphism}
There exists an extension $F: |f|^{-1}[0, \alpha]\to \R^n\setminus\{0\}$ of $f|_{|f|^{-1}\{\alpha\}}$
iff there exists  an extension $G: X\to\R^n\setminus\{0\}$ of $f|_A$.
\end{corollary}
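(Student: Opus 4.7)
The plan is to transport the given extension along the homeomorphism $h$ of Lemma~\ref{homeomorphism} and then repair the boundary mismatch by means of the homotopy extension property. The homeomorphism $h$ restricted to $A$ maps to $|f|^{-1}\{\alpha\}$, but $h|_A$ is not the identity, so the naive compositions $F\circ h$ and $G\circ h^{-1}$ will only agree with $f$ on the boundary \emph{up to homotopy}. The homotopy $H$ supplied by the lemma, combined with the fact that $f\circ H$ has no root on $A$, is precisely what supplies a homotopy into $\R^n\setminus\{0\}$ that allows HEP to correct this mismatch.

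For the implication ``$\Rightarrow$'', given an extension $F\:|f|^{-1}[0,\alpha]\to\R^n\setminus\{0\}$ of $f|_{|f|^{-1}\{\alpha\}}$, I would set $G_0:=F\circ h\:X\to\R^n\setminus\{0\}$. On $A$ we then have $G_0|_A = F\circ h|_A = f\circ h|_A$, since $h(A)\subseteq|f|^{-1}\{\alpha\}$ and $F$ agrees with $f$ there. The composition $f\circ H$, restricted to $A\times[0,1]$, is by Lemma~\ref{homeomorphism} a map into $\R^n\setminus\{0\}$, and it is a homotopy from $f|_A$ to $f\circ h|_A = G_0|_A$. Applying HEP to the pair $(X,A)$ with values in $\R^n\setminus\{0\}$, I extend this homotopy starting from $G_0$ over all of $X$; its endpoint is the desired $G\:X\to\R^n\setminus\{0\}$ with $G|_A = f|_A$.

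The implication ``$\Leftarrow$'' is symmetric. Given $G\:X\to\R^n\setminus\{0\}$ with $G|_A=f|_A$, I set $F_0 := G\circ h^{-1}\:|f|^{-1}[0,\alpha]\to\R^n\setminus\{0\}$. Writing $p := h^{-1}|_{|f|^{-1}\{\alpha\}}\:|f|^{-1}\{\alpha\}\to A$, we have $F_0|_{|f|^{-1}\{\alpha\}} = G\circ p = f\circ p$. Define $\tilde H(y,t) := H(p(y),t)$ for $y\in|f|^{-1}\{\alpha\}$; this is a homotopy from $p$ to $\id_{|f|^{-1}\{\alpha\}}$ (since $H(\cdot,0)=\id$ and $H(\cdot,1)=h$, hence $H(p(y),1)=h(p(y))=y$), and $f\circ\tilde H$ avoids $0$ because $p(y)\in A$ and $f\circ H$ is nonvanishing on $A\times[0,1]$. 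Thus $f\circ\tilde H$ is a homotopy from $F_0|_{|f|^{-1}\{\alpha\}}$ to $f|_{|f|^{-1}\{\alpha\}}$ in $\R^n\setminus\{0\}$. Applying HEP to the pair $(|f|^{-1}[0,\alpha],|f|^{-1}\{\alpha\})$ with values in $\R^n\setminus\{0\}$, I extend this homotopy starting from $F_0$; its endpoint $F$ is the required extension.

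The only real subtlety is to verify that HEP is available in both instances. Both pairs $(X,A)$ and $(|f|^{-1}[0,\alpha], |f|^{-1}\{\alpha\})$ are closed pairs in a compact metric space, and the target $\R^n\setminus\{0\}$ is homotopy equivalent to the finitely triangulable sphere $S^{n-1}$; by the result of Hu cited in the topological preliminaries this is enough to guarantee HEP. Everything else is formal bookkeeping, provided one keeps track of the nonvanishing condition $f\circ H\neq 0$ on $A$ supplied by Lemma~\ref{homeomorphism}, which is the one ingredient that prevents the correcting homotopy from passing through the origin.
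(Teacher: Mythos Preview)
Your argument is correct and follows the same route as the paper: transport the extension through the homeomorphism $h$ of Lemma~\ref{homeomorphism}, then use the homotopy $f\circ H|_{A\times[0,1]}$ together with the homotopy extension property to fix the boundary values. The paper is marginally more economical in that it uses the homeomorphism to phrase both directions as a single equivalence on the $(X,A)$ side (``$F$ extends $f|_{|f|^{-1}\{\alpha\}}$ iff $(f\circ h)|_A$ extends on $X$ iff $f|_A$ extends on $X$''), invoking HEP only for the pair $(X,A)$, whereas you treat the two implications separately and invoke HEP once for $(X,A)$ and once for $(|f|^{-1}[0,\alpha],|f|^{-1}\{\alpha\})$; but this is a cosmetic difference, not a substantive one.
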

\begin{proof}We proved that $h=H(1,\cdot)$ is a homeomorphism of the pairs $(X,A)$ and $(|f|^{-1}[0, \alpha],$  $|f|^{-1}\{\alpha\})$.
It follows that there exists an extension $F: |f|^{-1}[0, \alpha]\to\R^n\setminus\{0\}$ of $f|_{|f|^{-1}\{\alpha\}}$ iff there exists an extension
$G_1: X\to \R^n\setminus\{0\}$ of $(f\circ h)|_A$.
\begin{equation*}
\label{simpl_com_extension}
\begin{diagram}
X & \rTo_{h} & |f|^{-1}[0,\alpha] &  & \\
\uTo^{\rotatebox[origin=c]{90}{$\subseteq$}} & &  \uTo^{\rotatebox[origin=c]{90}{$\subseteq$}} & \rdDashto  \hskip 18pt F &\\
A & \rTo^{h} & |f|^{-1}\{\alpha\} & \rTo^{f} &  \R^n\setminus\{0\}
\end{diagram}
\end{equation*}
But $f\circ H|_{A\times [0,1]}$ is a homotopy between $f\circ h|_A: A\to \R^n\setminus\{0\}$ and $f|_A: A\to\R^n\setminus\{0\}$, so there exists an extension $G_1: X\to\R^n\setminus\{0\}$
of $(f\circ h)|_A$ iff there exists an extension $G: X\to\R^n\setminus\{0\}$ of $f|_A$. 
\end{proof}
\begin{corollary}
If $f: K\to\R^n$ is satisfiable, then the robustness $\rob(f)$
is a critical value of $f$.
\end{corollary}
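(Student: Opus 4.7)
The plan is to argue by contradiction via the combinatorial approximation of Lemma~\ref{homeomorphism}: after passing to a sufficiently fine subdivision $K'$, the obstruction to extending $f|_{|f|^{-1}\{\alpha\}}$ into $\R^n\setminus\{0\}$ is encoded by the finite data $(X_\alpha, A_\alpha, f|_{A_\alpha})$, and this data depends on $\alpha$ only through the comparison of $|f(v)|$ with $\alpha$ at vertices $v \in V(K')$. I first dispose of the trivial case $\rob(f)=0$: any root $x_0$ of $f$ lies in some simplex $\Delta\in K$, giving $\min_{x\in\Delta}|f(x)|=0=\rob(f)$, which is automatically a critical value in the sense of Definition~\ref{def: critical}.

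Now assume $\alpha_0 := \rob(f) > 0$ and, for contradiction, that $\alpha_0 \notin C$, where $C$ is the finite set of critical values. Pick $\epsilon > 0$ so small that $[\alpha_0, \alpha_0+\epsilon] \cap C = \emptyset$. The main step is to produce a derived subdivision $K'$ of $K$ satisfying the hypothesis of Lemma~\ref{homeomorphism} such that $|f(v)| \notin [\alpha_0,\alpha_0+\epsilon]$ for every $v\in V(K')$. Granting such a $K'$: for every $\alpha \in [\alpha_0,\alpha_0+\epsilon]$ and every vertex $v\in V(K')$, the sign of $|f(v)| - \alpha$ is constant in $\alpha$, so $\chi_\alpha(v) \in \{0,1\}$ and is independent of $\alpha$. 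Hence the PL function $\chi_\alpha$ itself, and with it the pair $(X_\alpha, A_\alpha)$ together with the restriction $f|_{A_\alpha}$, are literally the same objects for every such $\alpha$. By Corollary~\ref{col:homeomorphism} this reduces extendability of $f|_{|f|^{-1}\{\alpha\}}$ into $\R^n\setminus\{0\}$ to a single, $\alpha$-independent extension problem, and then Lemma~\ref{lemma:robust2ext} yields that ``every $\alpha$-perturbation of $f$ has a root'' is a condition constant in $\alpha$ on $[\alpha_0,\alpha_0+\epsilon]$. It holds at $\alpha_0$ by definition of $\rob(f)$ and the closedness statement preceding Definition~\ref{def: robustness of f}, hence it also holds at $\alpha_0+\epsilon > \alpha_0$, contradicting the maximality of $\rob(f)$.

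The main obstacle is the construction of $K'$, for which I would proceed by induction on $\dim\Delta$, after its boundary has been subdivided. When the argmin of $|f|$ on $\Delta$ lies in the relative interior of $\Delta$, I take the star point $a_\Delta$ to be that argmin; then $|f(a_\Delta)| = \min_\Delta|f| \in C$ automatically avoids $[\alpha_0,\alpha_0+\epsilon]$, and since $|f(a_\Delta)|$ is the global minimum on $\Delta$ the min-at-vertex condition is inherited by every new simplex $\{a_\Delta\}\cup\sigma$, while the max is at a vertex by convexity. The delicate case is when the argmin lies only on $\partial\Delta$: here the key observation is that the interior range of $|f|$ on $\Delta$ cannot be entirely contained in $[\alpha_0,\alpha_0+\epsilon]$, since the closure of that range equals $[\min_\Delta|f|,\max_\Delta|f|]$ and this would force both endpoints, which belong to $C$, to lie in $[\alpha_0,\alpha_0+\epsilon]$, contradicting the choice of $\epsilon$. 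Thus I can pick $a_\Delta$ in the interior with $|f(a_\Delta)| \notin [\alpha_0,\alpha_0+\epsilon]$, and convexity of $|f|$ together with the inductively correct subdivision of $\partial\Delta$ (where the global minimum of $|f|$ on $\Delta$ is already realised at a vertex) keeps the min-at-vertex condition on every new simplex. This bookkeeping is the only nontrivial step; everything else is a clean assembly of Lemmas~\ref{lemma:robust2ext}, \ref{homeomorphism} and Corollary~\ref{col:homeomorphism}.
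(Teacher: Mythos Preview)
Your overall strategy---showing that $(X_\alpha, A_\alpha, f|_{A_\alpha})$ is constant for $\alpha$ in an interval free of critical values, and then invoking Corollary~\ref{col:homeomorphism} and Lemma~\ref{lemma:robust2ext}---matches the paper's. But you have inflated the construction of $K'$ into the ``main obstacle'', and your handling of it is flawed. The paper's argument is one line: take $K'$ to be the subdivision already produced for Lemma~\ref{homeomorphism}, obtained by starring each simplex of $K$ at $\argmin_\Delta|f|$. Every vertex $v$ of $K'$ is then either an original vertex of $K$ (so $|f(v)|=\min_{\{v\}}|f|$ with $\{v\}\in K$) or an argmin point of some $\Delta\in K$ (so $|f(v)|=\min_\Delta|f|$); in either case $|f(v)|\in C$. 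Hence once $[\alpha,\beta]\cap C=\emptyset$, all vertex values avoid $[\alpha,\beta]$ automatically and $\chi_\alpha=\chi_\beta$. No bespoke subdivision is needed.

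Your ``delicate case'' is in fact wrong: starring $\Delta$ at an interior point $a_\Delta$ that is \emph{not} the argmin need not preserve the min-at-vertex condition. Convexity of $|f|$ pins the \emph{maximum} to a vertex, not the minimum, and the fact that the global minimum of $|f|$ on $\Delta$ sits at some boundary vertex says nothing about a new simplex $\{a_\Delta,v_1,\ldots,v_k\}$ that does not contain that vertex. Concretely, let $\Delta$ be the $2$-simplex on $(0,0),(1,0),(0,1)$ and $f(x,y)=(x,\,y-\tfrac12)$ with the Euclidean norm, so $\argmin_\Delta|f|=(0,\tfrac12)\in\partial\Delta$. After subdividing the edges at their argmins and then starring $\Delta$ at, say, $a_\Delta=(0.3,0.3)$, the new edge from $a_\Delta$ to $(0,1)$ has its $|f|$-minimum strictly in its interior (near $t\approx 0.4$ along the edge, with value $\approx 0.20$, while the endpoint values are $\approx 0.36$ and $0.5$).
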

\begin{proof}
Assume that $\alpha$ is not a critical value of~$f$ and each $\alpha$-perturbation of $f$ has a root. There are only finitely
many critical values, so there exists a $\beta>\alpha$ such that
$[\alpha,\beta]$ doesn't contain any critical value of $f$.
Assume that $|f|$ contains both the maximum and minimum on each simplex in a vertex, as
in the construction of Lemma~\ref{homeomorphism}. Then
for each vertex $v$ it holds that $|f(v)|<\alpha$ iff $|f(v)|<\beta$,
$|f(v)|>\alpha$ iff $|f(v)|>\beta$ and $\alpha\neq |f(v)|\neq\beta$,
so, the simplicial sets $A\subseteq X$ are identical for $\alpha$ and $\beta$.
It follows from Lemma~\ref{lemma:robust2ext} and~\ref{homeomorphism} $f: A\to\R^n\setminus\{0\}$ cannot
be extended to a map $X\to\R^n\setminus\{0\}$, and each $\beta$-perturbation of $f$ has a root as well. 
Thus $\alpha\neq \rob(f)$.
\end{proof}
%
\heading{The algorithm.}
It remains to show that the reduction from the $\alpha$-robust satisfiability of $f=0$ to the extension problem as
described in Theorem~\ref{top-existence} can be done in an algorithmic way. In order to do this, we will need to
convert the spaces $A\subseteq X$ constructed in Lemma~\ref{homeomorphism} into simplicial complexes and the PL function
$f|_A: A\to\R^n\setminus\{0\}$ into an equivalent simplicial map to the $(n-1)$-sphere.

In Lemma~\ref{homeomorphism} we assumed that $|f|$ contains the minimum on each simplex in some vertex.
This can be achieved by a starring each simplex $\Delta$ of $K$ at the point 
$\argmin_\Delta |f|$ whenever it belongs to the interior of $\Delta$. The point $\argmin_\Delta |f|$ can be computed by the assumption \eqref{e:min}. 

\begin{proof}[Proof of Theorem~\ref{t:decidability}]
First we compute $\argmin_\Delta |f|$ for each $\Delta\in K$ and construct a derived subdivision of $K$ by starring each
simplex $\Delta$ in $\argmin_\Delta |f|$.
Then $|f|$ contains the minimum on each simplex in a vertex.
Due to the linearity of $f$ on each simplex $\Delta$, $|f(ax+by)|=|a f(x) + b f(y)|\leq a |f(x)| + b |f(y)|$ for $x, y\in\Delta$,
$a,b\geq 0$, and $a+b=1$, so $|f|$ is a convex function and hence takes the maximum on each simplex in a vertex too.

In Lemma~\ref{lemma:robust2ext} we reduced the ROB-SAT problem for $f$ and $\alpha$ to the extendability of
$f|_{|f|^{-1}\{\alpha\}}:\, |f|^{-1}\{\alpha\}\to\R^n\setminus\{0\}$ 
to a function $|f|^{-1}[0,\alpha]\to\R^n\setminus\{0\}$.
By Corollary~\ref{col:homeomorphism}, this is equivalent to the extendability of $f|_A: A\to\R^n\setminus\{0\}$ to a function $X\to\R^n\setminus\{0\},$ where $X$ and $A$ are constructed as in Lemma~\ref{homeomorphism} with the use of the assumption \eqref{e:compare}. We will construct a subdivision $K'$ of $K$ containing a triangulation of $X$ and $A$.
To this end, $K'$ is constructed by starring each simplex $\Delta\in K$ that intersects $\chi^{-1}(1/2)$ in a point
$x_\Delta\in\Delta\cap\chi^{-1}(1/2)$. Now $\chi$ is a PL function on $K'$ that assigns values in $\{0,1/2,1\}$
to the vertices and, additionally,  no simplex has a pair of vertices evaluated to $0$ and to $1$.
Thus $X$ is spanned by vertices evaluated to $0$ or $1/2$, and $A$ is a simplicial subcomplex of $X$ spanned by
vertices evaluated to $1/2$.

The extendability of $f: A\to\R^n\setminus\{0\}$ to a function $X\to\R^n\setminus\{0\}$ is
further equivalent to the extendability of $r\circ f|_A: A\to S^{n-1}$ to a function $X\to S^{n-1}$, where $r(x):=x/|x|_1$ is the homotopy equivalence $\R^n\setminus\{0\}\to S^{n-1}$.

Next, we can algorithmically construct a subdivision $A'\subseteq X'$ of $A$ and $X$
such that for each $v\in A'$, there exists a pair $(i_v,s_v)\in \{1,\ldots, n\}\times \{+,-\}$ such that $f_{i_v}$
has constant sign $s_v$ on $star(v, A)$.
Let $\Sigma^{n-1}$ be a simplicial representation of the sphere, as defined at the beginning of Section~\ref{sec:decidability}.
Let $f^\Delta: A'\to \Sigma^{n-1}$ be a simplicial map that maps each vertex $v\in A'$ to $s_v \,e_{i_v}$.
This map is well defined, because if $a,b\in\Delta \in A'$, then $f^\Delta(b)\neq -f^\Delta(a)$, so $f^\Delta$ maps simplices to simplices.
It follows that $(r\circ f)(star(v, A))\subseteq star^\circ (f^\Delta(v), \Sigma^{n-1})$ for each
vertex $v\in A'$, so $f^\Delta \simeq r\circ f: A'\to S^{n-1}$ are homotopic by the simplicial approximation theorem.
The pair $(X',A')$ has the homotopy extension property, so there exists an extension $X\to S^{n-1}$ of $r\circ f|_A$
iff there exists an extension $X'\to S^{n-1}$ of the simplicial map $f^\Delta$.
In the case $\dim K<2n-2=2(n-1)$, it is decidable by Theorem~\ref{top-existence}, in the case $n\leq 2$, 
it is decidable by Lemma~\ref{low_dim} and for $n$ odd, it is decidable by Theorem~\ref{t:oddspheres}.

It remains to prove polynomiality for the cases $\dim K<2n-2$ or $n\leq 2$ for fixed $n$.
The input contains some encoding of the PL function $f$, which includes the information about $K$. 
In the algorithm, we first subdivided $K$ by starring it in $\argmin_\Delta{|f|}$ for each $\Delta\in K$ 
and then again by triangulating the sets $A$ and
$X$ defined in~\ref{homeomorphism}. 


Further, we construct a subdivision $X',A'$ of the pair $X,A$ such that $f$ can be represented by a simplicial map from $A'$. 
This can be achieved as follows. Let $i\in \{1,\ldots, n\}$. For each edge $ab$ of $A$ that intersects $f_i^{-1}\{0\}$ in its interior and is not contained in $f_i^{-1}\{0\}$, we choose a point $v_{ab}\in f_i^{-1}(0)\cap A$, and then star $X$ in each of this point (in some order). 
Consequently, the open star of each point in $A'$ is mapped by $f$ into a star of some vertex in $S^{n-1}$.
So for a fixed dimension $n$, we need only a fixed number of subdivisions of $(X,A)$ 
in order to construct the simplicial approximation $f^\Delta$ as in the proof of Theorem~\ref{t:decidability}. 

We reduced the ROB-SAT problem to the extension problem for $f^{\Delta}: A'\to S^{n-1}$
and $X'\supseteq A'$ and the number of simplices in $X'$ depends 
polynomially on the number of simplices in the original simplicial complex $K$.
Finally, we use the fact that the decision procedure for the extension problem for maps to spheres is polynomial in the number of simplices in $X'$, assuming that the dimension of $X'$ is fixed (Theorem~\ref{top-existence}).
\end{proof}

Note that if $f$ has a root, then the robustness $\rob(f)$ can be computed exactly by deciding, whether each $\alpha$-perturbation
of $f$ has a root for $\alpha$ ranging over the computable finite set of critical values. 
Moreover, if $n$ is fixed, then $\rob(f)$ can be computed in polynomial time.
\heading{Inequalities.} Here we show, how to reduce the robust satisfiability of systems of equations and inequalities to the ROB-SAT problem
for a system of equations. In this section, we assume that the norm $\|\cdot \|$ on $\R^n$-valued functions (implicitly used
in the definition of $\alpha$-perturbation) is derived from the 
max-norm in $\R^n$, i.e. $\|f\|:=\sup_{x\in K}\,\max_i |f_i(x)|$. Contrary to other parts of this paper, the proof of the 
following lemma uses the choice of the norm. 
\begin{lemma}
\label{lemma:ineq}
Let $(f,g):K\to \R^n\times\R^k$ and $\alpha>0$. Let $U:=\{x\in K\,|\,g(x)\leq -\alpha\}$~\footnote{The notation $g(x)\leq -\alpha$
means that $g_i(x)\leq-\alpha$ for each component $g_i$ of $g$.}.
Then each $\alpha$-perturbation of $f=0\,\wedge\,g\leq 0$ is satisfiable, 
iff each $\alpha$-perturbation $h: U\to \R^n$ of $f|_U$ has a root. 
\end{lemma}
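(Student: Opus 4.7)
The plan is to prove both implications separately, using two key features of the max-norm: first, that shifting $g$ by the constant vector $\alpha\mathbf{1}$ is an $\alpha$-perturbation of $g$; second, that the componentwise inequality $|\tilde g - g|\le \alpha$ is equivalent to $-\alpha \le \tilde g - g \le \alpha$ coordinate by coordinate.

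For the forward direction, I assume every $\alpha$-perturbation of $f = 0 \wedge g\le 0$ is satisfiable and start from an arbitrary $\alpha$-perturbation $h\:U\to\R^n$ of $f|_U$. The first step is to extend $h$ to a full $\alpha$-perturbation $\tilde f\:K\to\R^n$ of $f$. I do this by applying the Tietze extension theorem coordinatewise (with truncation to $[-\alpha,\alpha]$) to the continuous map $h - f|_U\:U\to\R^n$, which is bounded in max-norm by $\alpha$; note that $U = g^{-1}((-\infty,-\alpha]^k)$ is closed by continuity of $g$. Call the extension $\psi\:K\to\R^n$ and set $\tilde f := f+\psi$. The second step is the clever choice of the $g$-perturbation: take $\tilde g := g + \alpha\mathbf{1}$, which is an $\alpha$-perturbation of $g$ in max-norm and satisfies $\tilde g(x)\le 0$ exactly when $g(x)\le -\alpha$, i.e.\ exactly when $x\in U$. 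Applying the hypothesis to the pair $(\tilde f,\tilde g)$ produces a point $x$ with $\tilde f(x)=0$ and $\tilde g(x)\le 0$; this $x$ lies in $U$ and satisfies $h(x)=\tilde f(x)=0$, as required.

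For the backward direction, I assume every $\alpha$-perturbation of $f|_U$ has a root in $U$ and let $(\tilde f,\tilde g)$ be an arbitrary $\alpha$-perturbation of the system. The restriction $h:=\tilde f|_U$ is automatically an $\alpha$-perturbation of $f|_U$, so the hypothesis supplies a root $x\in U$ of $h$, meaning $\tilde f(x)=0$. Because $x\in U$ we have $g(x)\le -\alpha$ componentwise, and the max-norm bound $\|g-\tilde g\|\le\alpha$ gives $\tilde g(x)\le g(x)+\alpha\le 0$ componentwise, so $x$ witnesses satisfiability of $(\tilde f,\tilde g)$.

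The main conceptual point, and the only place where something nontrivial is needed, is the choice $\tilde g = g+\alpha\mathbf{1}$ together with the Tietze-type extension of the arbitrary perturbation $h$ of $f|_U$ to all of $K$. Both steps rely on the max-norm hypothesis; without it, the shift of $g$ by $\alpha\mathbf{1}$ need not be an $\alpha$-perturbation, and the Tietze extension would need to be replaced by a norm-preserving extension that may not exist for an arbitrary norm. Once these two pieces are in place, the rest of the argument is a short chain of implications.
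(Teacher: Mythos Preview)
Your proof is correct and follows the same approach as the paper: the key move in both is the choice $\tilde g = g+\alpha\mathbf{1}$ for the forward direction and the inclusion $U\subseteq\{\tilde g\le 0\}$ for the backward one. You are in fact more careful than the paper, which silently treats an $\alpha$-perturbation of $f|_U$ as if it were already defined on all of $K$; your explicit Tietze extension step (coordinatewise, using the max-norm) fills this small gap.
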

\begin{proof}
The function $\tilde{g}(x):=g(x)+\alpha$ is an $\alpha$-perturbation of $g$. So, if each $\alpha$-perturbation of $f=0\,\wedge\,g\leq 0$
is satisfiable, then each $\alpha$-perturbation $\tilde{f}$ of $f$ has a root in $U=\{x\in K\,|\,\tilde{g}\leq 0\}$.

For the other implication, suppose that each $\alpha$-perturbation of $f$ has a root in $U$ and let
$\tilde{f}=0\,\wedge\,\tilde{g}\leq 0$ be an $\alpha$-perturbation of $f=0\,\wedge\,g\leq 0$.
Then $U\subseteq \{x\in K\,|\,\tilde{g}(x)\leq 0\}$, so $\tilde{f}=0$ has a solution on $\{x\in K\,|\,\tilde{g}(x)\leq 0\}$.
\end{proof}
If $f,g$ are PL function and $\dim K\leq 2n-3$ or $n\leq 2$, then 
we can construct a triangulation of the set $U=\{x\in K\,|\,g(x)\leq -\alpha\}$ and use Theorem~\ref{t:decidability}
to decide whether each $\alpha$-perturbation of $f$ has a root in $U$.

\section{Undecidability}
\label{sec:undec}
Here we  show that the robustness of roots of a PL function cannot be approximated in general.
We cannot algorithmically distinguish functions $f$ with $\rob(f)=0$
from functions $g$ with $\rob(g)\geq 1$.

We will continue to represent the sphere $S^{n-1}$ as a triangulation of $\{x\in\R^n\,|\,|x|_1=1\}$ as defined
at the beginning of Section~\ref{sec:decidability}.
All norms in $\R^{n}$ are equivalent, so for the norm $|\cdot |$  there exist numbers $\kappa_n>0$ and $\lambda_n>0$ such that
$|x|_1 \leq \kappa_n |x|$ and $|x| \leq \lambda_n |x|_1$ for all $x\in\R^n$. For the case of the max-norm,
$\kappa_n$ can be chosen to be $n$ and $\lambda_n$ can be chosen to be $1$. 
\begin{proof}[Proof of Theorem~\ref{t:undecidability}]
The proof proceeds by reduction from the extension problem as stated in Theorem~\ref{top-nonexistence}.
Let $X$ be a simplicial complex of dimension $2(n-1)$ and let $f: A\to \Sigma^{n-1}$ be a simplicial map from
some $A\subseteq X$ to the $(n-1)$-sphere. We assume that $n-1\geq 2$ is even, so by Theorem~\ref{top-nonexistence}, we cannot
algorithmically decide, given $X,A$ and $f$, whether there exists  an extension $X\to |\Sigma^{n-1}|$ of $f$ or not. 

We will assume that $A$ is \emph{full} in $X$, that is, for each simplex $\Delta$ it holds that $\Delta\in A$ iff
all vertices of $\Delta$ are in $A$.
(If $A$ is not full in $X$, then $X$ may be algorithmically subdivided into a complex $X'$ containing $A$, such that
$A$ is full in $X'$~\cite[Lemma 3.3]{rourke-sanderson:82}.)

Define a PL function $f': X\to\R^n$ given by its values on the vertices of $X$ as follows:
$$
f'(v):=\begin{cases}\kappa_n f(v); & \text{when }v\in A \\
0 & \text{when }v\notin A.
\end{cases}
$$
We will show the following statements:
\begin{enumerate}
\item[A.]
if there exists a~$1$-perturbation of $f'$ without a root, then $f$ can be extended to a function $X\to S^{n-1}$,
\item[B.]
if $f$ has an extension $F:X\to S^{n-1}$, then for every $\epsilon>0$ there exists an $\epsilon$-perturbation of $f'$ without a root.
\end{enumerate}
So, an algorithm defined in Theorem~\ref{t:undecidability} could decide the topological extension problem for
$f:A\to\Sigma^{n-1}$ and $X\supseteq A$, which is impossible by Theorem~\ref{top-nonexistence}.

It remains to show A. and B. We remind that for our realization of the sphere $S^{n-1}\subseteq\R^n$,
we have that $|f'(x)|=\kappa_n |f(x)|\geq |f(x)|_1=1$ for each $x\in |A|$.

\begin{enumerate}
\item[A.] Let $g'$ be a $1$-perturbation of $f'$ without a root.
It holds that $g'|_A$ is homotopic to $f'|_A=\kappa_n f$ as functions from $A$ to $\R^n\setminus\{0\}$ via the straight-line homotopy,
because
\begin{align*}
& |t g'(x) - (1-t) f'(x)|=|f'(x) + t (g'(x)-f'(x))| \\
& \geq |f'(x)| - t |g'(x)-f'(x)|\geq 1-t
\end{align*}
which is positive for $t<1$ and for $t=1$, $g'(x)\neq 0$ by assumption.
Since $g'$ is an extension of $g'|_A$, the map $f'$ can also be extended as a map into $\R^n\setminus\{0\}$ and $f$ can
be extended to a map from $|X|$ into $|\Sigma^{n-1}|=S^{n-1}$.
\item[B.]
We need the auxiliary PL function $\chi_A\:X\to \R$ that is $1$ on the vertices of $A$ and $0$ on the remaining vertices of $X$.
We will perturb $f'$ on a subdomain where it is ``tiny'',
namely, on $\chi_A^{-1}[0,\delta]$ for $\delta:=\epsilon /(2\,\kappa_n \lambda_n)$ and leave $f'$ unchanged on $\chi_A^{-1}[\delta,1]$ where no root occurs. Indeed, for every point $x$ with $\chi_A(x)>0$ either $x\in A$ (then $f'(x)\neq0$) or $x$ belongs to a simplex $\Delta\notin A$ and then $x$ lies in the interior of a segment $\overline{yz}$ where $$y\in\chi_A^{-1}(0) \text{ and } z\in |A|$$ (here we use that $A$ is full in $X$). That means, $x=cy+dz$ for $d>0$ and thus $f'(x)\neq 0$.

To define the perturbation we need an auxiliary continuous ``blow-up''
map $r\:\chi_A^{-1}[0,\delta]\to X$ defined on each $x=cy+dz$ with $c+d=1$, $d\le\delta$ and $y$ and $z$ as above (but here possibly $d=0$) by
$$r(x):= c'y+(d/\delta)z$$ where $c'+d/\delta=1$. The points $y$ and $z$ depend continuously on $x$, so $r$ is continuous as well.
\begin{figure}
\begin{center}
\includegraphics{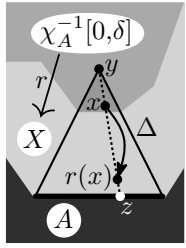}
\end{center}
\caption{Visualization of the map $r$ used in the proof of Theorem~\ref{t:undecidability}.}
\end{figure}

Let $F\:|X|\to |\Sigma^{n-1}|$ be the extension of $f$. We define an $\epsilon$-perturbation $g'$ of $f'$ by
$$g'(x):=\begin{cases}
\delta \, \kappa_n F(r(x)) & \text{when }x\in \chi_A^{-1}[0,\delta];\\
f'(x) & \text{otherwise.} \\
\end{cases}$$
To see that $g'$ is continuous, it is enough to verify that  $\delta \,\kappa_n\, F(r(x))=f'(x)$ for $x\in\chi_A^{-1}(\delta)$, that is, when $x=(1-\delta)y+\delta z$ for some $y$ and $z$ as above.
Indeed,
\begin{align*}
\delta\, \kappa_n F(r(x))&=\delta \, \kappa_n F(0y+1z)=\delta \,\kappa_n f(z),\quad\text{and}\\
f'(x)&=(1-\delta)\,f'(y)+\delta  f'(z)=0+\delta \,\kappa_n \,f(z).
\end{align*}
Finally, we check that $g'$ is an $\epsilon$-perturbation of $f'$ by the following estimation for every $x=cy+dz\in\chi_A^{-1}[0,\delta]$:
\begin{align*}
&|\delta\,\kappa_n F(r(x))-f'(x)| \leq \delta \, \kappa_n\|F\|+|f'(cy+dz)|\\
&\le \delta\,\kappa_n \lambda_n + d|f'(x)|\delta \,\kappa_n\,\lambda_n  + d\,\kappa_n |f(z)| \\
&\leq \delta\,\kappa_n\,\lambda_n + d \,\kappa_n \lambda_n \le 2\delta\kappa_n \lambda_n \leq \epsilon,
\end{align*}
because necessarily $d\leq\delta$. 
\end{enumerate}
\end{proof}

\heading{Inequalities.} An immediate consequence is that for systems of equations and inequalities, 
we get an undecidability result even for homotopically trivial domains such as products of compact intervals.
Formally, there is no algorithm that, given a triangulation $T$ of $[-1,1]^d$, PL functions $f,g: T\to \R^n$, correctly chooses at least one of the following options:
\begin{itemize}
\item the robustness of $f=0\,\wedge\,g\leq 0$ is greater than $0$, or,
\item the robustness of $f=0\,\wedge\,g\leq 0$ is less than $1$.
\end{itemize}

This is proved by a reduction from Theorem~\ref{t:undecidability} as follows. Let $K$ be a simplicial complex and $f: K\to\R^n$ a~PL function. 
We can algorithmically construct a PL embedding $K\hookrightarrow [-1,1]^d$ for some $d$
and a triangulation $T$ of $[-1,1]^d$ containing a subdivision $K'$ of $K$~\cite[p. 16]{rourke-sanderson:82}.
Furthermore, we assume that $K'$ is full in $T$ (otherwise we would subdivide $T$ once more). Define a scalar valued PL function $g: T\to\R$ to be $-1$ on the vertices of $K'$
and $1$ on the vertices of $T\setminus K'$. We immediately see that $\{x\,|\,g(x)\leq -1\}=|K'|$. 
Extend $f$ to a PL function $f_T: T\to\R^n$, by setting $f(v)=0$ for each vertex $T\setminus K'$.

If some $1$-perturbation of the system $f_T=0\,\wedge\,g\leq 0$ is not satisfiable, then it follows that some $1$-perturbation of $f$ has no root. Conversely, assume that, for some $\epsilon\in (0,1)$, 
some $\epsilon$-perturbation $\tilde{f}$ of $f$ has no root. Then we can extend $\tilde{f}$ to a function $\tilde{f}_T: T\to \R^n$
as follows:
\begin{itemize}
\item for $x\in T\setminus star (K',T)$, $\tilde{f}_T(x):=0$, and
\item for $x\in\Delta\in star(K',T)\setminus K'$, we have $x= t x_K + (1-t) x_T$ for uniquely determined 
$x_K\in \Delta\cap K'$, $x_T\in \Delta\cap (star(K',T)\setminus K')$ and $t>0$; then we define $\tilde{f}_T(x):=t \tilde{f}(x_K)$.
\end{itemize}
The resulting function $\tilde{f}_T$ is an $\epsilon$-perturbation of $\tilde{f}$, nonzero on the open star $star^\circ(K',T)$.
Clearly, $\{x\,|\,g(x)\leq 0\}\subseteq star^\circ(K',T)$ and thus $\tilde{f}_T=0\,\wedge\,g\leq 0$ is an
unsatisfiable $\epsilon$-perturbation of $f_T=0\,\wedge\,g\leq 0$.

We reduced the problem of deciding, whether $\rob(f)$
is greater that $0$ or less than $1$, to the problem of deciding, whether the robustness of $f_T=0\,\wedge\,g\leq 0$ is greater than $0$ or less than $1$. The former is undecidable by Theorem~\ref{t:undecidability},
so it follows that the latter is undecidable as well. 

\section{Nonlinear functions}
\label{s:nonlinear}
The decidability of the ROB-SAT problem for the class of PL functions defined in Theorem~\ref{t:decidability}
implies analogous results for larger function spaces that can be uniformly approximated by PL functions.
For example, any function $f: K\to\R^n$ containing expressions such as polynomials, division, $\sin$, $\exp$, $\log$ and $\pi$
can be uniformly approximated by a PL function up to any $\epsilon>0$ using \emph{interval arithmetic}~\cite{Moore:09}.
This implies the existence of an algorithm that takes such function $f: K\to\R^n$ and two constants $\alpha, \epsilon>0$,
and correctly chooses at least one of the following:
\begin{itemize}
\item Each $\alpha$-perturbation of $f$ has a root,
\item There exists an $(\alpha+\epsilon)$-perturbation of $f$ with no root,
\end{itemize}
assuming that $\dim K\leq 2n-3$. 
The decision procedure for this problem can construct an $\epsilon/2$-approximation PL approximation $f^{PL}$ of $f$, defined on
a triangulation $K_\Delta$ of $K$, which reduces the above problem to the ROB-SAT problem for $f^{PL}$ and $\alpha$.
Further, it can be shown that if $n$ is fixed and $f$ is smooth, then the size of $K_\Delta$ depends polynomially on $J/\epsilon$,
where $J$ is the upper bound on second partial derivatives $\partial f_i/\partial x_j$. This yields an estimate on the
algorithm complexity.
If $|K|$ is an $m$-box (product of $m$ compact intervals), than $f=0$ represents a system of $n$ equations in $m$ variables, with a given upper and lower bound for each variable.

Conversely, the undecidability result of Theorem~\ref{t:undecidability} generalizes to any class of  nonlinear  functions $\mathcal{E}$ such that any PL function can be algorithmically uniformly
approximated by some $g\in\mathcal{E}$. Given $\epsilon>0$, each $PL$ function $f$ can be algorithmically approximated by a component-wise polynomial function $p$
such that $||f-p||<\epsilon$, which immediately implies that the ROB-SAT problem is undecidable for general systems of polynomial equations defined on simplicial complexes,
once we exceed the dimensions of the stable range. The undecidability result can be slightly strengthened to functions defined on smooth manifolds, because each
pair of simplicial complexes $(X,A)$ can be algorithmically embedded to a pair of manifolds $(X^m, A^m)$ in some $\R^m$ with the same homotopical properties.
The interesting case of functions defined on $m$-boxes $[-1,1]^m$ was mentioned in the \emph{open problems} paragraph at the end of Section~\ref{sec:intro}.

\bibliographystyle{plain}
\bibliography{sratscha,Postnikov}
\end{document}